%% file: main.tex
\documentclass[10pt, journal]{IEEEtran}

\input{setup}

\input{abbreviations}

\begin{document}
\bstctlcite{IEEEexample:BSTcontrol}
\maketitle

\input{0_abstract}

\acresetall
\input{1_introduction}
\input{2_preliminaries}
\input{3_mutual_information_computation}
\input{4_bicm_and_rll}
\input{5_ts_rll_codes}
\input{6_conclusion}

\bibliographystyle{IEEEtran}
\bibliography{bibliography}
\end{document}

%% file: setup.tex
\usepackage[utf8]{inputenc}
\usepackage{amsmath, amssymb, amsthm, bbm}
\usepackage{mathtools}
\usepackage{acro}
\usepackage{cleveref}
\usepackage{xcolor}
\usepackage{tikz}
\usepackage{pgfplots}
\usepackage{pgfplotstable}
\usepackage{siunitx}
\usepackage{xfrac}
\usepackage{bm}
\usepackage{xifthen}
\usepackage{tensor}
\usepackage{float}
\usepackage{algpseudocodex}
\usepackage{algorithm}
\usepackage{makecell}
\usepackage{hhline}
\usepackage{tabularx}
\usepackage{url}
\usepackage{upgreek}

\IEEEoverridecommandlockouts
\title{On Runlength Limited Codes for BICM Systems}

\author{Stephan~Zeitz,~
        Florian~Roth,~
        Meik~Dörpinghaus,~
        and~Gerhard~Fettweis,~% <-this % stops a space
\thanks{This work is part of the TeraGreen project which received funding from the Smart Networks and Services Joint Undertaking (SNS JU) under the European Union’s Horizon Europe research and innovation programme under Grant Agreement No. 101139117. Views and opinions expressed are however those of the authors only and do not necessarily reflect those of the European Union or SNS JU. Neither the European Union nor the granting authority can be held responsible for them. Parts of this work have been presented at the IEEE International Symposium on Personal, Indoor and Mobile Radio Communications (PIMRC) 2025 in Istanbul \cite{zeitz_2025_pimrc}.}% <-this % stops a space
\thanks{The authors are with the Vodafone Chair Mobile Communications Systems, Technische Universität Dresden, 01062 Dresden, Germany (\mbox{e-mail:} \{stephan.zeitz, florian.roth2, meik.doerpinghaus, gerhard.fettweis\}\mbox{@tu-dresden.de}).}}

\newtheorem{theorem}{Theorem}
\newtheorem{lemma}{Lemma}
\theoremstyle{definition}

\newtheorem{definition}{Definition}

\crefname{figure}{Fig.}{Figs.}
\crefname{table}{Table}{Tables}
\crefname{section}{Section}{Sections}
\crefname{lemma}{Lemma}{Lemmas}
\crefname{theorem}{Theorem}{Theorems}
\crefname{equation}{}{}
\crefname{definition}{Def.}{Defs.}
\crefname{example}{Example}{Examples}
\crefname{algorithm}{Algorithm}{Algorithms}

\DeclareMathOperator{\sinc}{sinc}

\DeclareMathOperator{\sign}{sign}

\DeclareMathOperator*{\argmax}{arg\,max}
\DeclareMathOperator*{\argmin}{arg\,min}
\DeclareMathOperator{\diag}{diag}
\DeclareMathOperator{\tr}{tr}

\newcommand{\low}[1]{_\mathrm{#1}}

\renewcommand{\d}{\mathrm{d}}

\newcommand*{\rll}[5]{\ensuremath{\tensor*[^{\ifthenelse{\isempty{#2}}{}{(#2)}}_{\ifthenelse{\isempty{#3}}{}{#3\,}}]{#1}{_{\ifthenelse{\isempty{#4}}{}{#4}}^{\ifthenelse{\isempty{#5}}{}{#5}}}}}
\newcommand*{\rllia}[5]{\ensuremath{\tensor*[^{\ifthenelse{\isempty{#2}}{}{\pi(#2)}}_{\ifthenelse{\isempty{#3}}{}{#3\,}}]{#1}{_{\ifthenelse{\isempty{#4}}{}{#4}}^{\ifthenelse{\isempty{#5}}{}{#5}}}}}
\newcommand*{\rlliat}[5]{\ensuremath{\tensor*[^{\ifthenelse{\isempty{#2}}{}{\tilde{\pi}(#2)}}_{\ifthenelse{\isempty{#3}}{}{#3\,}}]{#1}{_{\ifthenelse{\isempty{#4}}{}{#4}}^{\ifthenelse{\isempty{#5}}{}{#5}}}}}

\usetikzlibrary{shapes.geometric, arrows, fit, calc, positioning}
\tikzstyle{add} = [circle, minimum size=0.5cm, text centered, draw=black, label={[yshift=-0.505cm]\Large{+}}, node contents={}]
\tikzstyle{arrow} = [->, >=stealth] % without 'thick'?
\tikzstyle{block} = [rectangle, minimum width=3em, minimum height=2.5em, text centered, text width=3em, align=center, draw=black]%rounded corners
\tikzstyle{varBlock} = [rectangle, minimum height=2.2em, text centered, align=center, draw=black]
\newcommand\ppbb{path picture bounding box} % short form of TikZ command
\tikzstyle{adc} = [
    minimum size=3em,
    path picture={
        \draw (\ppbb.west) -- ([xshift=-3mm] \ppbb.center) --++ (45:6mm) ([xshift=+3mm] \ppbb.center) -- (\ppbb.east);
        \draw[stealth-stealth] (\ppbb.center) arc[start angle=0, end angle=75,radius=4mm];},
    label={[yshift=4mm] below:#1},
    node contents={}]
\tikzstyle{rectCutSE} = [
    minimum size=3em,
    path picture={
        \draw [fill=white] (\ppbb.north west) -- (\ppbb.north east) -- ([yshift=-1mm] \ppbb.east) -- ([xshift=4.5mm] \ppbb.south) -- (\ppbb.south west) -- (\ppbb.north west);},]
\tikzstyle{rectCutNW} = [
    minimum size=3em,
    path picture={
        \draw [fill=white] (\ppbb.south west) -- ([yshift=0mm] \ppbb.west) -- ([xshift=7mm] \ppbb.north) -- (\ppbb.north east) -- ( \ppbb.south east) --  (\ppbb.south west);},]

%% file: abbreviations.tex
\DeclareAcronym{adc}{
	short = ADC,
	long = analog-to-digital converter
}
\DeclareAcronym{agc}{
	short = AGC,
	long = automatic gain control
}
\DeclareAcronym{awgn}{
	short = AWGN,
	long = additive white Gaussian noise
}
\DeclareAcronym{bcjr}{
	short = BCJR,
	long = Bahl Cocke Jelinek Raviv
}
\DeclareAcronym{ber}{
	short = BER,
	long = bit error rate
}
\DeclareAcronym{bicm}{
	short = BICM,
	long = bit-interleaved coded modulation
}
\DeclareAcronym{bpsk}{
	short = BPSK,
	long = binary phase-shift keying
}
\DeclareAcronym{bsc}{
	short = BSC,
	long = binary symmetric channel
}
\DeclareAcronym{cdf}{
	short = CDF,
	long = cumulative distribution function
}
\DeclareAcronym{cm}{
	short = CM,
	long = coded modulation
}
\DeclareAcronym{dft}{
	short = DFT,
	long = discrete Fourier transform
}
\DeclareAcronym{ecc}{
	short = ECC,
	long = error correction coding
}
\DeclareAcronym{fec}{
	short = FEC,
	long = forward error correction
}
\DeclareAcronym{foo}{
	short = FOO,
	long = first-order optimal
}
\DeclareAcronym{fsm}{
	short = FSM,
	long = finite state machine
}
\DeclareAcronym{ftn}{
	short = FTN,
	long = faster-than-Nyquist
}
\DeclareAcronym{hdd}{
	short = HDD,
	long = hard disk drive
}
\DeclareAcronym{ia}{
    short = IA,
    long = index assignment
}
\DeclareAcronym{idft}{
    short = IDFT,
    long = inverse \ac{dft}
}
\DeclareAcronym{isi}{
	short = ISI,
	long = intersymbol interference
}
\DeclareAcronym{lap}{
	short = LAP,
	long = linear assignment problem
}
\DeclareAcronym{ldpc}{
	short = LDPC,
	long = low-density parity-check
}
\DeclareAcronym{llr}{
	short = LLR,
	long = log-likelihood ratio
}
\DeclareAcronym{map}{
	short = MAP,
	long = maximum a posteriori
}
\DeclareAcronym{mds}{
	short = MDS,
	long = multidimensional scaling
}
\DeclareAcronym{mimo}{
	short = MIMO,
	long = multiple-input multiple-output
}
\DeclareAcronym{mis}{
	short = MIS,
	long = Markov information source
}
\DeclareAcronym{ml}{
	short = ML,
	long = maximum likelihood
}
\DeclareAcronym{mmwave}{
short = mmWave,
long = millimeter wave
}
\DeclareAcronym{mse}{
	short = MSE,
	long = mean-squared error
}
\DeclareAcronym{msrll}{
	short = MS-RLL,
	long = multi-state RLL
}
\DeclareAcronym{nrzi}{
	short = NRZI,
	long = non-return-to-zero-inverse
}
\DeclareAcronym{pdf}{
    short = PDF,
    long = probability density function
}
\DeclareAcronym{prml}{
	short = PRML,
	long = partial response maximum likelihood
}
\DeclareAcronym{psd}{
    short = PSD,
    long = power spectral density
}
\DeclareAcronym{qam}{
    short = QAM,
    long = quadrature amplitude modulation
}
\DeclareAcronym{qap}{
    short = QAP,
    long = quadratic assignment problem
}
\DeclareAcronym{rc}{
	short = RC,
	long = raised cosine
}
\DeclareAcronym{rll}{
	short = RLL,
	long = runlength-limited
}
\DeclareAcronym{rrc}{
	short = RRC,
	long = root-raised cosine
}
\DeclareAcronym{snr}{
	short = SNR,
	long = signal-to-noise ratio
}
\DeclareAcronym{ssd}{
    short = SSD,
    long = solid state disk
}
\DeclareAcronym{subthz}{
    short = sub-\si{\tera\hertz},
    long = sub-terahertz
}
\DeclareAcronym{tsrll}{
    short = TS-RLL,
    long = two-state RLL
}
\DeclareAcronym{vlc}{
	short = VLC,
	long = visible light communications
}
\DeclareAcronym{zxm}{
	short = ZXM,
	long = zero-crossing modulation
}

%% file: 0_abstract.tex
% !TEX root=main.tex
% !TEX program=pdflatex
% !TEX options=--shell-escape

\begin{abstract}
We study the use of \ac{rll} block codes in \ac{bicm} systems. In this setting, the \ac{rll} code acts as the symbol mapper, whose assignment between input bits and \ac{rll} symbols is critical for performance. In this work, we aim at optimizing the assignment scheme of \ac{rll} codes.

One of the main applications of \ac{rll} codes is the mitigation of \ac{isi} in systems with coarse quantization. However, channel memory and quantization nonlinearity complicate information theoretic analysis. To enable analytical treatment, we consider a block channel with 1-bit analog-to-digital conversion, modeling the transmission of a single \ac{rll} code block. For this channel, we investigate the relationship between the achievable rate in \ac{bicm} systems---termed \ac{bicm} capacity---and the \ac{rll} code's assignment scheme. Focusing on low \acp{snr}, we derive the optimization problem yielding the optimal assignment scheme. By looking at asymptotically large block-lengths, we infer a practical optimization strategy for \ac{rll} codes with finite block-length and channels with inter-block interference. Further, we extend this optimization to \ac{tsrll} codes, which offer higher code rates than state independent \ac{rll} codes. We demonstrate that optimized \ac{tsrll} codes exhibit significant performance improvements over literature counterparts.\looseness-1
\end{abstract}
\begin{IEEEkeywords}
runlength limited (RLL) sequences, bit-mapping, symbol mapper, 1-bit quantization, zero-crossing modulation (ZXM)
\end{IEEEkeywords}

%% file: 1_introduction.tex
% !TEX root=main.tex
% !TEX program=pdflatex
% !TEX options=--shell-escape

\section{Introduction}\label{sec:introduction}
\Ac{rll} codes belong to the class of modulation codes and have a long history in magnetic recording devices \cite{immink2004codes}. In these devices, data is stored by orienting small magnets along a concentric track, either in the direction of motion of the write head or in the perpendicular direction \cite{garani_channels_2023}. To store data, typically a logic ‘$1$’ is encoded as a magnetic flux transition and a logic ‘$0$’ as the absence of a transition. During the readout process, the stored magnetization pattern manifests itself as a signal of alternating voltage pulses in the inductive read head. Prior to 1990, \acp{hdd} used peak detection circuits to recover the data \cite{galbraith_iterative_2008}. As these circuits are only capable of detecting the existence and the timing of pulses, a problem for these systems arose from \ac{isi}: When the distance between adjacent transitions of magnetic flux (also termed \emph{runlength}) became too small, the resulting pulses overlapped and could not be detected separately---or even cancelled each other completely \cite{siegel_recording_1985}. To solve this problem, a precoding step was introduced, where arbitrary binary input data was precoded in a way that ensured a minimum runlength and, as \acp{hdd} also needed a means of clock recovery, also a maximum runlength. Sequences that fulfill these properties are termed \acf{rll} sequences and the precoding process is thus termed \ac{rll} encoding.

As modern magnetic storage devices have swapped peak detection circuits for high-resolution \acp{adc} and use sophisticated decoding techniques, which allow for stronger \ac{isi} to be tolerated, \ac{rll} sequences have lost some of their relevance for controlling \ac{isi} in these devices \cite{garani_channels_2023}. However, in wideband communications systems the opposite trend might be happening in the near future: From an annually compiled survey on \ac{adc} implementations \cite{adc_survey} it is known that the power consumption of \acp{adc} starts to grow quadratically with the sampling rate if the sampling rate exceeds a certain threshold. With future systems being expected to utilize the vast bandwidths in the \ac{mmwave} and \ac{subthz} bands, which consequently require very fast sampling \acp{adc}, \ac{adc} power consumption is likely to become a problem for these systems. Since at the same time advances in semiconductor technology allow ever faster switching speeds and reduced supply voltage leave less headroom for amplitude processing \cite{6243515}, a promising solution for future wideband communications systems might lie in reducing \ac{adc} amplitude resolution in favor of enhanced temporal resolution, making use of the exponential relationship between \ac{adc} amplitude resolution in bits and its power consumption \cite{walden_analog_1999}. 

The most radical decrease in amplitude resolution to only 1 bit promises also other advantages such as relaxed linearity requirements for the analog frontend and designs without the need of an \ac{agc}. Using conventional Nyquist-rate signaling, the lack in amplitude resolution in a system with \mbox{1-bit} quantization would greatly limit the achievable spectral efficiency. However, this can partly be compensated by employing \ac{ftn} signaling \cite{mazo_faster-than-nyquist_1975} and temporal oversampling w.r.t. the Nyquist rate. As \ac{ftn} signaling inherently leads to \ac{isi} and oversampled 1-bit quantization effectively only allows to recover the temporal positions of the received signal's zero-crossings, such a system faces challenges very similar to the ones in magnetic storage devices using peak detection circuits. Consequently, over the past decade there has been a renewed interest in \ac{rll} sequences as a means to control the \ac{isi} arising from \ac{ftn} signaling in systems using temporally oversampled \mbox{1-bit} quantization. The modulation scheme resulting from combining \ac{ftn} signaling and \ac{rll} encoding has been introduced in \cite{fettweis_zero_2019, landau20181} and is termed \acf{zxm}.

\Acf{bicm} was first described in \cite{zehavi_8-psk_1992} and differs from \ac{cm} by separating \ac{fec} coding and modulation by a bit-level interleaving step. Not only do \ac{bicm} systems offer better performance for fading channels than \ac{cm} systems \cite{caire_bit-interleaved_1998}, they are also attractive from an implementation point of view, as the \ac{fec} code and the modulator can be selected independently. Therefore, we too, are interested in using \ac{zxm} in a \ac{bicm} system. A comprehensive analysis of \ac{bicm} is given in \cite{caire_bit-interleaved_1998}, where the term \emph{\ac{bicm} capacity} was coined. The \ac{bicm} capacity refers to the maximum mutual information that can be achieved in such systems for a given modulation and it is a well known fact that the \ac{bicm} capacity depends on the assignment between input bits and channel symbols, as defined by the modulator. This dependency has been studied in \cite{agrell_bicm_2011}, where based on a linearization of the \ac{bicm} capacity in the wideband regime, i.e., for low \acp{snr}, it was shown that constellations are first-order optimal only if they can be obtained by an affine transformation of the input bits. 

For \ac{zxm}, the \ac{rll} encoder effectively takes the role of the modulator and, therefore, the assignment between input bits and \ac{rll} symbols---which in this work is termed the \emph{assignment scheme}---influences the \ac{bicm} capacity. While \cite{landau20181} analyzed the \ac{zxm} waveform under the assumption of idealized, maxentropic \ac{rll} sequences, the first investigation of \ac{zxm} in a \ac{bicm} configuration (considering practical \ac{rll} en- and decoding) was conducted in \cite{neuhaus_zero_crossing_2021}. There, it was observed that the \ac{bicm} capacity for low \acp{snr} stays significantly behind the \ac{cm} capacity investigated in \cite{landau20181}. A first step to improve the \ac{rll} coding for \ac{zxm} systems was taken in \cite{zeitz_2025_pimrc}, and this paper expands upon those results.

Our contributions are summarized as follows: 
\begin{itemize}
 \item As most analysis of \ac{bicm} systems explicitly exclude channels with memory---which are the only channels for which the \ac{isi} mitigating properties of \ac{rll} sequences are relevant---we start by analyzing the \ac{bicm} capacity for a block channel model that considers the transmission of a single \ac{rll} word and a receiver that applies \mbox{1-bit} quantization. We derive an expression for a linear approximation of the \ac{bicm} capacity for asymptotically low \acp{snr}.
 \item The resulting expression allows to understand the underlying dependency of the \ac{bicm} capacity on the assignment between bits and \ac{rll} words. We show that the optimal assignment is given as the solution of a \ac{qap}.
 \item By changing the point of view from finding the best assignment towards constructing the best channel input words, we find that channel input words can only be optimal, if they can be obtained by a linear projection of a hypercube, confirming a result from \cite{agrell_bicm_2011}. Moreover, we show that for asymptotically large block length, one optimal linear projection is \emph{temporally localized}, meaning that the influence of each bit on the transmit signal decays over time. This motivates the use of the transformation also for the system model with finite block length and inter-block interference, that we are interested in. 
 \item In general, \ac{rll} words cannot be obtained by a linear projection of a hypercube. Nevertheless, the result can be leveraged for the optimization of the assignment scheme of state-independent \ac{rll} codes: We find an approximation of the \ac{qap} by applying an affine transformation to the input bits and solving a \ac{lap}, minimizing the total Euclidean distance between the transformed input bits and the available \ac{rll} words.
 \item Further, we introduce a class of \ac{rll} codes that strikes a good balance between achievable code rate and degrees of freedom for the selection of the assignment scheme, termed \ac{tsrll} codes, and discuss the optimization of its assignment scheme.
 \item By evaluating lower bounds on the achievable rate in \ac{zxm} systems, we give numerical evidence of the effectiveness of the proposed optimization approach for the assignment scheme.
\end{itemize}

The remainder of this work is structured as follows. In \cref{sec:preliminaries} we introduce background information on \ac{rll} sequences, the \ac{zxm} system model, and \ac{bicm} systems. Subsequently, methods to determine the performance of \ac{rll} codes in \ac{zxm} systems are described in \cref{sec:mutual_information_rate_computation}. In \cref{sec:bicm_capacity_derivative}, we investigate the \ac{bicm} capacity for a block channel model that considers the transmission of a single \ac{rll} word. Based on a linear approximation of the \ac{bicm} capacity, we derive an optimization approach for the assignment scheme of state-independent \ac{rll} codes and numerically demonstrate its effectiveness. \Ac{tsrll} codes are introduced in \cref{sec:ts_rll_codes}, where we also show a numerical evaluation of their performance. Finally, \cref{sec:conclusion} concludes the paper. \looseness-1

%% file: 2_preliminaries.tex
% !TEX root=main.tex
% !TEX program=pdflatex
% !TEX options=--shell-escape

\section{Preliminaries}\label{sec:preliminaries}

\subsection{\Ac{rll} Sequences and \Ac{rll} Codes}
In a bipolar sequence $\{x_k\}$ with $x_k \in \{-1,1\}$, the distance between consecutive transitions from $-1$ to $+1$ and vice versa is commonly termed \emph{runlength}. In unconstrained sequences, these can take on any value in $\mathbb{N}\!\!\setminus\!\! 0$. Differently, \emph{\acf{rll}} sequences \cite{immink2004codes} constrain the runlengths such that runs of less than $d+1$ symbols are forbidden as well as runs of more than $k+1$ symbols. When \ac{rll} sequences are transmitted over bandlimited channels, the value of the $d$ constraint has an influence on the \ac{isi}, as it controls the highest frequency of amplitude transitions. The $k$-constraint on the other hand controls the lowest frequency of amplitude transitions and therefore has an impact on the synchronization of self-clocking systems \cite{immink2004codes}. 

\Ac{rll} sequences are closely related to binary $(d,k)$ sequences, which adhere to the following rule: A '$1$' must be followed by at least $d$ '$0$'s and at most $k$ '$0$'s. \Ac{rll} sequences are created from $(d,k)$ sequences via \ac{nrzi} encoding, where each '$1$' in the $(d,k)$ sequence triggers an amplitude transition from $-1$ to $1$ or vice versa in the corresponding \ac{rll} sequence. In this work we are purely focused on using \ac{rll} sequences to mitigate \ac{isi}, therefore, for the rest of the work the $k$ constraint is omitted, i.e., $k=\infty$. 

The number of $(d, \infty)$ sequences of length $n$ can be calculated recursively as \cite{immink2004codes}
\begin{equation}\label{eq:number_d_seqs}
        N_d(n) = \begin{cases} 1 & n = 0 \\ n+1 & 1 \leq n \leq d+1 \\ N_d(n-1) + N_d(n-d-1) & n > d+1\end{cases},
\end{equation}
where for $d=1$ this yields the recurrence relation of the Fibonacci numbers. Using this result, the maximum entropy rate of $d$-sequences can be stated as
\begin{equation}\label{eq:rll_capacity}
        C_\mathrm{RLL}(d) = \lim_{n\to\infty} \frac{1}{n} \log_2 N_d(n),
\end{equation}
which is termed the \emph{capacity} of the constraint \cite{immink2004codes}. The expression from \eqref{eq:rll_capacity} also holds for \ac{rll} sequences, as with a given initialization (i.e., $+1$ or $-1$) there is a one-to-one mapping between $(d, \infty)$ sequences and \ac{rll} sequences.

Practical encoding from i.i.d. bit sequences to \ac{rll} sequences is accomplished by \ac{rll} codes. In this work, we consider block codes: To achieve \ac{rll} encoding, the sequence of data bits is first split into blocks of length $p$, which we term \emph{data words} and denote them by $\bm w \in \{0,1\}^p$. Subsequently, each data word is mapped onto a word consisting of $q$ \ac{rll} symbols, for which we use the term \emph{\ac{rll} word} and denote it by $\bm x$. The \ac{rll} code essentially is a codebook that defines this mapping. As we will see, this mapping does not need to be static, but instead can depend on an internal \emph{encoder state}. After each encoding step, the state is updated based on the current state and the current data word. 

The phenomenon when decoding errors in one block also lead to decoding errors in the next block is called error propagation. To limit the effect of error propagation, we typically want that a data word can be deduced, given the corresponding \ac{rll} word and the current encoder state. This requires an injective mapping between data words and \ac{rll} words for each state of the encoder and, therefore, the number of distinct \ac{rll} sequences must be at least as large as the number of data words, i.e., $2^p$. However, these two are rarely equal, such that selecting the assignment scheme of an \ac{rll} code typically also involves choosing a set of \ac{rll} words and leaving some which are not assigned any data words. This gives rise to the following definition of the assignment scheme:
\begin{definition}\label{def:definition_1}
        In an $N$-state \ac{rll} encoder, the assignment scheme is a set of $N$ injective functions $\{\tilde{\sigma}_1, \hdots, \tilde{\sigma}_N\}$ that define i) how data words are mapped to \ac{rll} words in each state and ii) which \ac{rll} words remain unused.
\end{definition}
To simplify the notation, we will assume a lexicographical order of the data words and the \ac{rll} words, such that they can be referred to by their indices, which will be denoted by bracketed subscripts, e.g., for $p=3$, \mbox{$\bm w_{(1)} = \begin{bmatrix}0 & 0 & 0\end{bmatrix}^H$} denotes the first data word.

Practical \ac{rll} codes are often evaluated and compared to one another in terms of the \emph{code efficiency}, i.e., the code rate $R\low{RLL}$ normalized to the capacity. For block codes the efficiency is given by
\begin{equation}
        \eta = \frac{R\low{RLL}}{C_\mathrm{RLL}(d)} = \frac{\sfrac{p}{q}}{C_\mathrm{RLL}(d)}.
\end{equation}

\subsection{ZXM System Model}
We consider the real-valued \ac{zxm} system, which is schematically depicted in the inner part of \cref{fig:system_model}. Note that the extension to a complex system model is straight-forward and intentionally left out here. 
\begin{figure*}[t]
\centering
\input{Figures/bicm_system_model}
\caption{Schematic representation of the system model.}
\label{fig:system_model}
\end{figure*}

The transmit signal is created by combining $M\low{Tx}$-fold \ac{ftn} signaling, i.e., transmitting $M\low{Tx}$ symbols per Nyquist interval $T$, and \ac{rll} encoded transmit sequences. The $d$-constraint is selected as $d=M\low{Tx}-1$ to ensure that the same amplitude is kept for a minimum duration of one Nyquist interval, as proposed in \cite{neuhaus_zero_crossing_2021}. The linearly modulated transmit signal is therefore given by
\begin{equation}
u(t) = \sum_{k} x_k h\low{Tx}\left(t-k\frac{T}{M\low{Tx}}\right),
\end{equation}
where $x_k$ denotes the $k$-th symbol of the \ac{rll} transmit sequence $\bm x$, $h\low{Tx}(t)$ is a \ac{rrc} transmit filter with single-sided bandwidth $W\low{Tx}=\frac{1+\beta}{2 T}$, and $\beta$ denotes the \ac{rrc} roll-off factor. 

The signal is transmitted over an \ac{awgn} channel, where it is superimposed with a white Gaussian noise process $\tilde\eta(t)$ having a \ac{psd} of $N_0/2$. After filtering with the matched receive filter $h\low{Rx}(t)=h\low{Tx}(t)$, the signal can be expressed as
\begin{equation}\label{eq:received_signal}
	r(t) = \sum_{k} x_k\, v\left(t-k\frac{T}{M\low{Tx}}\right) + \eta(t) = s(t)+\eta(t).
\end{equation}
Here, $v(t)=(h\low{Tx}*h\low{Rx})(t)$ denotes the combined transmit and receive filter and $\eta(t) = (\tilde\eta*h\low{Rx})(t)$ is the filtered noise process.\looseness-1

We define the \ac{snr} as the ratio of the power of the continuous-time, noise free received signal $s(t)$ and the power of the filtered noise process $\eta(t)$, i.e.,
\begin{equation}
	\mathrm{SNR} = \frac{\lim_{T_s\to\infty}\frac{1}{T_s}\int_{T_s}|s(t)|^2\mathrm{d}t}{N_0/(2T)}.
\end{equation}

After the receive filter, the signal is sampled at a rate of $M\low{Rx}/T$, which yields $M=\frac{M\low{Rx}}{M\low{Tx}}\geq 1$ samples per transmit symbol. The 1-bit quantization operation is described by 
\begin{equation}\label{eq:1bit_quantization}
	y_l = \sign(r_l)
\end{equation}
where $r_l = r(lT/M\low{Rx})$ and $\sign(x) = 1$ if $x\geq 0$ and $-1$ otherwise.

After the \ac{adc}, a \ac{bcjr} algorithm-based equalizer \cite{bahl1974optimal} uses the \mbox{1-bit} quantized samples to compute a sequence of \acp{llr} $\{\psi_k\}$, where the $k$-th element corresponds to the $k$-th \ac{rll} transmit symbol $x_k$. To decode the sequence of \acp{llr} provided by the equalizer and to provide a sequence of bit-wise metrics for the \ac{fec} decoder, i.e., $\{\lambda_i\}$ with 
\begin{equation}
	\lambda_i = \log \frac{P(w_i = 0 | \{\psi_k\})}{P(w_i = 1 | \{\psi_k\})},
\end{equation}
we employ the soft-input-soft-output \ac{bcjr} based \ac{map} \ac{rll} decoder from \cite{neuhaus_zero_crossing_2021}. The decoder is used for all types of \ac{rll} codes considered in this work.
Note that the soft-output \ac{rll} decoding is useful in systems where the \ac{fec} decoder makes use of the soft-information.

\subsection{Bit-Interleaved Coded Modulation}
The idea of \acf{bicm} was originally proposed in \cite{zehavi_8-psk_1992} to improve the performance of trellis-coded modulation in fading channels. Due to its good performance and its flexibility---resulting from the decoupling of \ac{fec} encoding and modulation by an interleaving step---\ac{bicm} has become the de-facto standard architecture for contemporary digital communications and storage systems, among which are 4G/5G systems \cite{iscan_polar_2018}, IEEE  802.11 standards \cite{815305}, DVB-T \cite{etsi_dvbt}, and modern hard disk drives \cite{lu_novel_2007}. 

The fundamental \ac{bicm} transmitter is realized by a serial concatenation of a binary \ac{fec} encoder, a bit-level interleaver, and a symbol mapper. Usually, the  symbol mapper comprises a memoryless (i.e., stateless) one-to-one assignment rule that maps each length-$p$ binary data word $\bm w$ to one channel symbol $\bm x \in \mathbb{R}^q$, drawn from a discrete symbol alphabet $\mathcal{X}$. 

The receiver in \ac{bicm} systems operates in reverse: For each channel output $\bm y \in \mathbb{R}^q$, a soft-decision demapper computes bit-wise metrics, which are then de-interleaved and processed by the \ac{fec} decoder. An important insight from \cite{caire_bit-interleaved_1998} is that under the assumption of ideal, infinite interleaving and a memoryless channel, the transmission over $p$ independent, parallel binary-input subchannels can be studied equivalently. Consequently, the relevant theoretical performance limit in \ac{bicm} systems is the sum over the mutual information of all subchannels, i.e.,
\begin{align}\label{eq:bicm_capacity_normal}
	C_\mathrm{BICM}=\sum_{i=1}^p I(\bm y; w_i),
\end{align}
which is termed \emph{\ac{bicm} capacity} and is a function of the assignment scheme of the symbol mapper \cite{agrell_bicm_2011}. In systems that use \ac{rll} transmit sequences (including \ac{zxm} systems), the \ac{rll} encoder effectively takes the role of the symbol mapper, such that the assignment scheme of the \ac{rll} code has a major impact on the system performance. 

%% file: Figures/bicm_system_model.tex
% !TEX root = ../main.tex
% !TEX program=pdflatex
% !TEX options=--shell-escape

\pgfdeclarelayer{back}
\pgfsetlayers{back,main}

\makeatletter
\pgfkeys{%
  /tikz/on layer/.code={
    \def\tikz@path@do@at@end{\endpgfonlayer\endgroup\tikz@path@do@at@end}%
    \pgfonlayer{#1}\begingroup%
  }%
}
\newcommand{\gettikzxy}[3]{%
  \tikz@scan@one@point\pgfutil@firstofone#1\relax
  \edef#2{\the\pgf@x}%
  \edef#3{\the\pgf@y}%
}
\makeatother

\begin{tikzpicture}[node distance=1.3cm, every text node part/.style={align=center}]
    \node (input) {};
    \node (inputLabel) [draw=none, fill=none, left= -0.2cm of input.center, font=\footnotesize] {$d_\ell$};

    % transmitter
    \node (fecEnc) [varBlock, fill=white, right=0.3cm of input, font=\footnotesize] {FEC\\Enc.};
    \node (interl) [varBlock, fill=white, right=0.4cm of fecEnc] {\footnotesize{Interl.}};
    \node (rll) [varBlock, fill=white, right=0.4cm of interl, font=\footnotesize] {RLL\\Enc.};
    \node (dac) [varBlock, fill=white, right=0.5cm of rll] {\footnotesize{DAC}};
    \node (hTx) [varBlock, fill=white, right=0.4cm of dac,xshift=0mm] {\footnotesize{$h\low{Tx}(t)$}};

    % channel
    \node (add1) [add, right=0.3cm of hTx];
    \node (noise) [above= 0.12cm of add1] {$\tilde{\eta}(t)$};

    % receiver
    \node (hRx) [varBlock, fill=white, right=0.3cm of add1] {\footnotesize{$h\low{Rx}(t)$}};
    \node (adc) [varBlock, fill=white, right=0.4cm of hRx, font=\footnotesize] {1-bit\\ADC};
    \node (det) [varBlock, fill=white, right=0.5cm of adc, font=\footnotesize] {Equalizer};
    \node (rllDem) [varBlock, fill=white, right=0.4cm of det, font=\footnotesize] {RLL\\Dec.};
    \node (deinterl) [varBlock, fill=white, right=0.4cm of rllDem, font=\footnotesize] {De-Interl.};
    \node (fecDec) [varBlock, fill=white, right=0.4cm of deinterl, font=\footnotesize] {FEC\\Dec.};

    % box for the BICM channel
    \coordinate (bicmChStart) at ($(interl.south west)+(-0.175cm, -0.55cm)$);
    \coordinate (bicmChEnd) at ($(deinterl.north east)+(0.1cm, 0.8cm)$);
    \draw[fill=none, opacity=1, on layer=main, dashed] (bicmChStart) rectangle (bicmChEnd);
    \node (bicmChannel) [draw=none, anchor=south west, font=\footnotesize] at ($(bicmChStart)+(-0.15cm, 2.05cm)$) {BICM Channel};

    % box for the ZXM/CM channel
    \coordinate (cmChStart) at ($(dac.south west)+(-0.4cm, -0.52cm)$);
    \coordinate (cmChEnd) at ($(adc.north east)+(0.3cm, 0.5cm)$);
    \draw[fill=none, opacity=0.2, on layer=back, draw=gray] (cmChStart) rectangle (cmChEnd);
    \draw[draw=black, fill=none, opacity=1, on layer=main, dash pattern=on 2pt off 1pt] (cmChStart) rectangle (cmChEnd);
    \node (cmChannel) [draw=none, anchor=south west, font=\footnotesize] at ($(cmChStart)+(-0.12cm, 1.75cm)$) {ZXM Channel};

    % boxes for transmitter and receiver
    \coordinate (txStart) at ($(fecEnc.south west)+(-0.2cm, -0.1cm)$);
    \coordinate (txEnd) at ($(hTx.north east)+(0.1cm, 0.3cm)$);
    \draw[fill=gray!30, opacity=1, on layer=back] (txStart) rectangle (txEnd);
    \node (transmitter) [draw=none, font=\scriptsize, anchor=south west] at ($(txStart)+(-0.12cm, 1.1cm)$) {Transmitter};

    \coordinate (rxStart) at ($(hRx.south west)+(-0.1cm, -0.1cm)$);
    \coordinate (rxEnd) at ($(fecDec.north east)+(0.1cm, 0.3cm)$);
    \draw[fill=gray!30, opacity=1, on layer=back] (rxStart) rectangle (rxEnd);
    \node (output) [right=0.3cm of fecDec.east] {};
    \node (outputLabel) [draw=none, fill=none, right= -0.2cm of output.center, yshift=0.9pt, font=\footnotesize] {$\hat{d}_\ell$};
    \node (receiver) [draw=none, anchor=south east, font=\scriptsize] at ($(rxStart)+(8.6cm, 1.1cm)$) {Receiver};

    % labels (from left to right, top to bottom)
    \node (sigRate) [above of=dac,yshift=-6.5mm, font=\tiny] {signal. rate\\$M\low{Tx}/T$};
    
    \node (labelX) [below right=0.35cm and 0.35cm of rll.south east, anchor=center, font=\footnotesize] {$x_k$};
    \node (labelXEnd) [above=0.45cm of labelX] {};
    \coordinate (labelXStart) at ($(labelX.center)+(0cm, 0.15cm)$);
    \draw (labelXStart) -- (labelXEnd);
    \gettikzxy{(labelXStart)}{\x}{\yStart}
    \gettikzxy{(labelXEnd.south)}{\x}{\yEnd}

    \node (labelW) [below right=0.35cm and 0.2cm of interl.south east, anchor=center, font=\footnotesize] {$w_i$};
    \gettikzxy{(labelW)}{\xW}{\yW}
    \draw (\xW, \yStart) -- (\xW, \yEnd);

    \node (labelU) [below right=0.35cm and 0.2cm of hTx.south east, anchor=center, font=\footnotesize] {$u(t)$};
    \gettikzxy{(labelU)}{\xU}{\yU}
    \draw (\xU, \yStart) -- (\xU, \yEnd);

    \node (labelR) [below right=0.35cm and 0.2cm of hRx.south east, anchor=center, font=\footnotesize] {$r(t)$};
    \gettikzxy{(labelR)}{\xR}{\yR}
    \draw (\xR, \yStart) -- (\xR, \yEnd);

    \node (labelY) [below right=0.35cm and 0.12cm of adc.south east, anchor=center, font=\footnotesize] {$y_l$};
    \gettikzxy{(labelY)}{\xY}{\yY}
    \draw (\xY, \yStart) -- (\xY, \yEnd);

    \node (smpRate) [above of=adc,yshift=-6.5mm, font=\tiny] {samp. rate\\$M\low{Rx}/T$};
    
    \node[below right=0.35cm and 0.2cm of det.south east, anchor=center, font=\footnotesize] (labelPsi) {$\psi_k$};
    \gettikzxy{(labelPsi)}{\xPsi}{\yPsi}
    \draw (\xPsi, \yStart) -- (\xPsi, \yEnd);

    \node[below right=0.35cm and 0.2cm of rllDem.south east, anchor=center, font=\footnotesize] (labelLlr) {$\lambda_i$};
    \gettikzxy{(labelLlr)}{\xLlr}{\yLlr}
    \draw (\xLlr, \yStart) -- (\xLlr, \yEnd);

    % connection arrow    
    \draw [arrow] (input) -- (fecEnc);
    \draw [arrow] (fecEnc) -- (interl);
    \draw [arrow] (interl) -- (rll);
    \draw [arrow] (rll) -- (dac);
    \draw [arrow] (dac) -- (hTx);
    \draw [arrow] (hTx) -- (add1) ;
    \draw [arrow] ($(noise.south)+(0cm, +0.05cm)$) -- (add1);
    \draw [arrow] (add1) -- (hRx); % node[above,xshift=-11mm] {\footnotesize{$v(t)$}};
    \draw [arrow] (hRx) -- (adc);
    \draw [arrow] (adc) -- (det);
    \draw [arrow] (det) -- (rllDem);
    \draw [arrow] (rllDem) -- (deinterl);
    \draw [arrow] (deinterl) -- (fecDec);
    \draw [arrow] (fecDec) -- (output);
\end{tikzpicture}

%% file: 3_mutual_information_computation.tex
% !TEX root=main.tex
% !TEX program=pdflatex
% !TEX options=--shell-escape

\section{Mutual Information Rate}\label{sec:mutual_information_rate_computation}
In this work, we assess the system performance using two metrics. To characterize the performance of a given \ac{rll} code and its corresponding assignment scheme, we numerically compute a lower bound on the normalized \ac{bicm} capacity of the system, following the approach from \cite{6093979}. In order to set these results into perspective, we also evaluate a lower bound on the \ac{cm} capacity, where we consider so-called maxentropic \ac{rll} sequences. While maxentropic \ac{rll} sequences achieve the \ac{rll} capacity (cf. \eqref{eq:rll_capacity}) and can easily be generated, they cannot be used to practically encode information, as the corresponding \ac{rll} code would require (theoretically) infinite block length. Thus, the latter performance metric can be viewed as an upper bound on what can be achieved in \ac{bicm} systems using \ac{zxm} and practical \ac{rll} codes.

\subsection{BICM Capacity Lower Bound}
To evaluate the mutual information rate that can be achieved in \ac{bicm} systems, we use the approach introduced in \cite{6093979} which has already been applied to \ac{zxm} systems in \cite{neuhaus_zero_crossing_2021}. For the transmission of $L$ \ac{rll} words, corresponding to $L p$ bits, we partition the vector of transmitted bits, the vector of 1-bit quantized observations, and the vector of bit-\acp{llr} as $\bm w^L = [\bm w_1^T, \bm w_2^T, \hdots, \bm w_L^T]^T$, $\bm{y}^L=[\bm y_1^T, \bm y_2^T, \hdots, \bm y_L^T]^T$, and $\bm \lambda^{L} = [\bm \lambda_1^T, \bm \lambda_2^T, \hdots, \bm \lambda_L^T]^T$, respectively. For the subvectors we have $\bm w_l \in \{0,1\}^p$, $\bm y_l \in \{-1,1\}^{M q}$ and $\bm \lambda_l \in \mathbb{R}^p$. Since we consider a system with memory the definition of the \ac{bicm} capacity equivalent to \eqref{eq:bicm_capacity_normal} requires a limit and a normalization, i.e.,
\begin{align}\label{eq:zxm_bicm_capacity}
    C_\mathrm{BICM} & = \lim_{L\to \infty} \frac{M\low{Tx}}{L q} \sum_{i=1}^{L p} I(\bm y^L; w_{i}) \;\;[\mathrm{bpcu}],
\end{align}
corresponding to a normalization w.r.t. the Nyquist interval.
\begin{align}
    C_\mathrm{BICM} &\geq \lim_{L\to \infty} \frac{M\low{Tx}}{L q} \sum_{i=1}^{L p} I(\bm \lambda^L; w_{i}) \label{eq:bicm_lb_1} \\
    & \geq \lim_{L\to \infty} \frac{M\low{Tx}}{L q} \sum_{l=1}^L \sum_{i=1}^p I(\bm \lambda_l; w_{(l-1)p+i}) \label{eq:bicm_lb_2} \\
    & = \frac{M\low{Tx}}{q} \sum_{i=1}^p I(\bm \lambda_l; w_{(l-1)p+i}) \label{eq:bicm_lb_3} \\
    & = \frac{p M\low{Tx}}{q} - \frac{M\low{Tx}}{q} \sum_{i=1}^p H(w_{(l-1)p+i}| \bm \lambda_l) \label{eq:bicm_lb_4} \\
    & \geq \frac{p M\low{Tx}}{q} - \frac{M\low{Tx}}{q} \sum_{i=1}^p H(w_{(l-1)p+i}| \lambda_{(l-1)p+i}) \label{eq:bicm_lb_5}\\
    & = \hat{C}\low{BICM} \label{eq:bicm_lb_6}, 
\end{align}
where \eqref{eq:bicm_lb_1} follows from the data processing inequality, \eqref{eq:bicm_lb_2} holds due to the chain rule of the mutual information, \eqref{eq:bicm_lb_3} holds because we have identical statistics for all \ac{rll} words. Assuming an i.i.d. bit sequence with $P(w_i=1)=P(w_i=0)=\frac{1}{2}$ leads to \eqref{eq:bicm_lb_4}, and \eqref{eq:bicm_lb_5} follows from the fact that conditioning cannot increase entropy.

The conditional entropy in \eqref{eq:bicm_lb_5} can be efficiently estimated using histograms:
\begin{align}
    &\sum_{i=1}^p H(w_{(l-1)p+i}| \lambda_{(l-1)p+i}) \nonumber\\
    &\qquad \approx \sum_{i=1}^p \sum_{b=0}^1\sum_{n=1}^{N_\mathrm{bin}} \frac{1}{2}\Xi_{i,b,n}\log_2\frac{\Xi_{i,0,n}+\Xi_{i,1,n}}{\Xi_{i,b,n}},
    \label{eq:bicm_bound_numerical}
\end{align}
where $N\low{bin}$ denotes the number of histogram bins. As pointed out in \cite{6093979}, it is favorable to compute the conditional entropy $H(w_{(l-1)p+i}| \lambda_{(l-1)p+i})$ in terms of the bit probabilities $\xi_k = 1/(1+e^{-\lambda_k}) \in [0,1]$. This is owed to the fact that in contrast to \acp{llr}, bit probabilities have a finite support, which is better suited for the approximation using histograms. Therefore, in \eqref{eq:bicm_bound_numerical} $\Xi_{i,b,n}$ denotes the $n$-th bin of the histogram approximating $P(\xi_{i+(l-1)p}| w_{i+(l-1)p}=b)$.

\subsection{Auxiliary Channel Lower Bound on the Coded Modulation Capacity}
Evaluating the coded modulation capacity for the \ac{zxm} system requires the calculation of $C_{\mathrm{CM}} = M\low{Tx} I'(\bm x;\mathbf y)$, i.e., the mutual information rate between the \ac{rll} transmit sequence $\{x_k\}$ and the sequence of 1-bit quantized samples $\{y_l\}$ at the receiver. Due to the extensive memory of the system caused by \ac{ftn} signaling and the \ac{rll} sequences themselves, the direct evaluation of $I'(\bm x; \mathbf{y})$ is infeasible. To overcome this, we employ the auxiliary channel method proposed in \cite{arnold_simulation-based_2006}, which has already been applied to \ac{zxm} systems, e.g., in \cite{landau20181,LandauCommLett17}. The method relies on generating sufficiently long realization of the vectors $\bm x^n = [x_1, \hdots, x_{n}]^T$ and $\mathbf y^n = [\bm y_1^T, \hdots, \bm y_{n}^T]^T$, with $\bm y_k\in \{-1,1\}^M$, according to the \emph{true} channel law and to compute
\begin{equation}\label{eq:aux_ch_mi}
    \hat{I}'(\bm x; \mathbf y) = H'(\bm x) + \frac{1}{n}\big(- \log W(\mathbf y^n)+\log W(\mathbf y^n, \bm x^n)\big),
\end{equation}
where $W(\cdot)$ denotes an \emph{auxiliary} channel law and it holds that $\hat{I}'(\bm x; \mathbf y) \leq I'(\bm x; \mathbf y)$. Since we employ this method only for maxentropic \ac{rll} transmit sequences, the entropy rate $H'(\bm x)$ can be calculated analytically, i.e., $H'(\bm x) = C_\mathrm{RLL}(d)$, cf. \eqref{eq:rll_capacity}. 

To make the evaluation of $W(\mathbf y^n)$ and $W(\mathbf y^n, \bm x^n)$ computationally feasible, the auxiliary channel is constructed to approximate the true channel as close as possible while significantly reducing the memory length. This is achieved by symmetrically truncating the effective filter impulse response $v(t)$ and to treat the residual \ac{isi} from the filter tails as an additional Gaussian noise process. Moreover, noise correlation is only considered within a window of $N+1$ observation vectors $[\bm y_k^T, \hdots, \bm y_{k-N}^T]^T$, where $N$ can be varied to adapt the tightness of the bound, which depends on how well the auxiliary channel approximates the true channel \cite{arnold_simulation-based_2006}. Using the auxiliary channel, $W(\mathbf y^n)$ and $W(\mathbf y^n, \bm x^n)$ are efficiently computed using the sum-product algorithm. The construction of the auxiliary channel, including the calculation of all the probabilities involved, is discussed in detail in \cite{neuhaus_zero_crossing_2021,zeitz_timing_2024_2}.

%% file: 4_bicm_and_rll.tex
% !TEX root=main.tex
% !TEX program=pdflatex
% !TEX options=--shell-escape

\section{First-Order Expansion of the BICM Capacity and State Independent RLL Codes}\label{sec:bicm_capacity_derivative}
In this section we study the impact of the assignment scheme of a simple state-independent \ac{rll} block code on the mutual information rate achievable in \ac{bicm} systems for asymptotically low \ac{snr}, which enables analytical treatment. The channel model for \ac{zxm} systems as introduced in the previous section exhibits infinite memory. Unfortunately, most information theoretic treatments of \ac{bicm} systems such as \cite{caire_bit-interleaved_1998, martinez_bit-interleaved_2008, agrell_bicm_2011} explicitly exclude channels with memory. To enable an analytical treatment, in this section we study a real-valued block channel with \ac{awgn}, considering the transmission of a single \ac{rll} word of length $q$. Note that this channel model deviates from the channel model described in \eqref{eq:received_signal} and \eqref{eq:1bit_quantization} in that it i) fails to capture the influence of \ac{isi} between different blocks, ii) assumes that noise samples are uncorrelated, and iii) does not account for oversampling, i.e., in this section we assume $M=1$. However, as we will see, studying this channel at asymptotically large block lengths $q$ allows us to draw conclusions that also carry over to the channel with inter-block interference and finite block length and do not depend on $M$.

\subsection{Adapted Channel Model}\label{sec:adapted_channel_model}
A block channel model for the transmission of a single \ac{rll} word can be written as
\begin{align} \label{eq:adapted_channel}
	\bm y = \sign\left(\sqrt{\rho} \bm V \bm x + \bm n\right),
\end{align}
where $\bm y \in \{-1,1\}^q$ is the 1-bit quantized channel output vector, $\rho$ denotes the \ac{snr} and $\bm n \sim \mathcal{N}(\bm 0, \bm I_q)$ a $q$-dimensional vector of \ac{awgn}. Moreover, the vector $\bm x\in \{-1,1\}^q$ is assumed to be a valid \ac{rll} word and the \ac{isi} channel matrix $\bm V \in \mathbb{R}^{q\times q}$ is a symmetric Toeplitz matrix with
\begin{align}\label{eq:channel_matrix_toeplitz}
	\left[\bm V\right]_{i,j} = v\left(\frac{(i-j)T}{M\low{Rx}}\right).
\end{align}
The generating function $v(t)$ is assumed to have a frequency domain representation $V(f)$ with a general low-pass characteristic as
\begin{align}\label{eq:channel_frequency_domain_requirement}
	|V(f)| \in 	\begin{cases} 
					[1 - \delta_p, 1+\delta_p] & \text{if } 0 \leq f \leq \frac{1-\beta}{2T} \\ 
					(\delta_s, 1 - \delta_p) & \text{if } \frac{1-\beta}{2T} < f < \frac{1+\beta}{2T} \\ 
					[0, \delta_s] & \text{if } f \geq \frac{1+\beta}{2T} 
				\end{cases},
\end{align}
where $\delta_p< 1$ denotes the maximum amplitude of the passband ripples and $\delta_s \ll 1$ is the stopband attenuation. Further, we require that $v(t)$ is absolutely integrable. Note that these requirements are fulfilled if $v(t)$ is the impulse response of a \ac{rc} filter. 

We assume a state-less \ac{rll} encoder that maps data words of $p$ i.i.d. bits---denoted by $\bm w$---to $q$ \ac{rll} symbols comprising the \ac{rll} word $\bm x \in \mathcal{X}$. The \ac{rll} word alphabet $\mathcal{X}$ (the set of all \ac{rll} words that the encoder uses) has the cardinality $\lvert\mathcal{X}\rvert = 2^p$. Note that this entails that the \emph{selection} of \ac{rll} words for the encoder is already fixed. Here, we are merely interested in finding the best match between data and \ac{rll} words, which will be indicated by denoting the assignment function by $\sigma$ instead of $\tilde{\sigma}$ (where we omit the subscript as we consider a state independent code, cf. \cref{def:definition_1}). We also assume that $\sfrac{p}{q}\leq C\low{RLL}(d)$ is fulfilled. 

Because the data words are uniformly distributed with $p(\bm w) = \sfrac{1}{2^p} \; \forall \bm w \in \{0,1\}^p$, the same holds true for the \ac{rll} words, i.e., $p(\bm x) = \sfrac{1}{2^p} \; \forall \bm x \in \mathcal{X}$. Following \cite{martinez_bit-interleaved_2008}, we assume that the first moment of the constellation is zero and that the second moment is finite, i.e.,
\begin{align}\label{eq:constellation_moments}
	\bm \mu_{\mathcal{X}} = \frac{1}{2^p}\sum_{\bm x\in \mathcal{X}} \bm x = \bm 0 \quad \text{and} \quad  q\cdot E_s = \frac{1}{2^p}\sum_{\bm x \in \mathcal{X}} \bm x^T \bm x < \infty,
\end{align}
which can be fulfilled by \ac{rll} sequences. By $\mathcal{X}^i_b$ we denote the subset of channel input vectors, which have been assigned a data word that has bit $b$ in position $i$ by the assignment $\sigma$, i.e.,\looseness-1 
\begin{align}
	\mathcal{X}^i_b = \left\{\bm x_{\sigma(k)}\,  \lvert \, w_{(k),i} = b \right\}.
\end{align}
Similarly, we define the first moment of this subset as
\begin{align}
	\bm \mu_{\mathcal{X}^i_b} = \frac{1}{2^{p-1}} \sum_{\bm x \in \mathcal{X}^i_b} \bm x. 
\end{align}

\subsection{First-Order Expansion of the BICM Capacity}
We base our analysis on a first-order Maclaurin series of the \ac{bicm} capacity as a function of the \ac{snr}, which is similar to the works \cite{martinez_bit-interleaved_2008, agrell_bicm_2011} and allows us to obtain a closed-form expression. The \ac{bicm} capacity in this regime is well approximated as $C\low{BICM} \approx \alpha\low{BICM} \cdot \rho$, where the coefficient $\alpha\low{BICM}$ is the slope of the \ac{bicm} capacity at $\rho=0$, i.e., 
\begin{align}\label{eq:alpha_bicm_definition}
	\alpha\low{BICM} = \frac{\d}{\d \rho} C\low{BICM}\bigg\rvert_{\rho=0}.
\end{align}
Consequently, it is of interest to maximize $\alpha\low{BICM}$, which can be done by an appropriate choice of the assignment function $\sigma$\looseness-1. 
\begin{lemma}\label{lemma:lemma1}
	Consider the channel model from \eqref{eq:adapted_channel}. Assuming equiprobable signaling in the set of \ac{rll} words $\mathcal{X}$, the linear coefficient $\alpha\low{BICM}$ of the Maclaurin series of the \ac{bicm} capacity is given by
	\begin{align}\label{eq:lemma_1}
		\alpha\low{BICM} = \frac{1}{2\pi} \sum_{i=1}^p\sum_{b=0}^1 \bm \mu_{\mathcal{X}^i_b}^T\bm V^T \bm V \bm \mu_{\mathcal{X}^i_b}.
	\end{align}
\end{lemma}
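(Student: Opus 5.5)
The plan is a second-order (in $\sqrt{\rho}$) expansion of the conditional output distributions, followed by the standard local approximation of the Kullback--Leibler divergence by half the $\chi^2$-divergence. The block channel \eqref{eq:adapted_channel} has a finite output alphabet $\{-1,1\}^q$ and i.i.d. noise. Hence, conditioned on $\bm x$, the outputs are independent with
\begin{align*}
P(\bm y\,|\,\bm x)=\prod_{k=1}^q \Phi\bigl(y_k\sqrt{\rho}\,[\bm V\bm x]_k\bigr),
\end{align*}
where $\Phi$ denotes the standard Gaussian \ac{cdf}.

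\textbf{Step 1: expand the likelihoods.} Using $\Phi(t)=\tfrac12+\tfrac{t}{\sqrt{2\pi}}+O(t^3)$, I will write
\begin{align*}
P(\bm y\,|\,\bm x)=2^{-q}\Bigl(1+\sqrt{\tfrac{2\rho}{\pi}}\,\bm y^T\bm V\bm x+\rho\, c(\bm y,\bm x)+O(\rho^{3/2})\Bigr).
\end{align*}
Here $c(\bm y,\bm x)$ collects the cross terms $\tfrac{2}{\pi}\sum_{k<l}y_ky_l[\bm V\bm x]_k[\bm V\bm x]_l$. There is no quadratic term in $\Phi$ because $\Phi-\tfrac12$ is odd.

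\textbf{Step 2: average over the relevant sets.} Averaging over $\bm x\in\mathcal X^i_b$ (uniform, $2^{p-1}$ elements) gives
\begin{align*}
P(\bm y\,|\,w_i=b)=2^{-q}\bigl(1+\sqrt{2\rho/\pi}\,\bm y^T\bm V\bm\mu_{\mathcal X^i_b}+\rho\,\bar c_b(\bm y)+O(\rho^{3/2})\bigr).
\end{align*}
Averaging over all of $\mathcal X$ and using $\bm\mu_{\mathcal X}=\bm 0$ from \eqref{eq:constellation_moments} gives
\begin{align*}
P(\bm y)=2^{-q}\bigl(1+\rho\,\bar c(\bm y)+O(\rho^{3/2})\bigr).
\end{align*}

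\textbf{Step 3: local KL approximation.} I then write
\begin{align*}
I(\bm y;w_i)=\sum_{b}\tfrac12\,D\bigl(P(\cdot\,|\,w_i=b)\,\big\|\,P(\cdot)\bigr).
\end{align*}
For distributions on a fixed finite alphabet whose ratio is $1+\epsilon(\bm y)$ with $\epsilon=O(\sqrt\rho)$, a Taylor expansion of $\log$ gives
\begin{align*}
D=\tfrac12\,\mathbb E_{P}[\epsilon^2]+O(\mathbb E|\epsilon|^3).
\end{align*}
Here $\epsilon(\bm y)=\sqrt{2\rho/\pi}\,\bm y^T\bm V\bm\mu_{\mathcal X^i_b}+O(\rho)$. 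Therefore
\begin{align*}
\mathbb E[\epsilon^2]=\tfrac{2\rho}{\pi}\,\mathbb E_{\bm y}\bigl[(\bm y^T\bm V\bm\mu_{\mathcal X^i_b})^2\bigr]+O(\rho^{3/2}).
\end{align*}
The expectation is taken under the uniform distribution on $\{-1,1\}^q$ up to $O(\rho)$ corrections, and that distribution has $\mathbb E[\bm y\bm y^T]=\bm I_q$. This yields
\begin{align*}
D=\tfrac{\rho}{\pi}\,\bm\mu_{\mathcal X^i_b}^T\bm V^T\bm V\bm\mu_{\mathcal X^i_b}+O(\rho^{3/2}).
\end{align*}
Weighting by $\tfrac12$ and summing over $b$ and $i$ in \eqref{eq:bicm_capacity_normal} gives $C_{\mathrm{BICM}}=\tfrac{\rho}{2\pi}\sum_{i,b}\bm\mu_{\mathcal X^i_b}^T\bm V^T\bm V\bm\mu_{\mathcal X^i_b}+O(\rho^{3/2})$ (in nats). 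Differentiating at $\rho=0$ as in \eqref{eq:alpha_bicm_definition} then yields \eqref{eq:lemma_1}.

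\textbf{Main obstacle.} The delicate point is the bookkeeping of the $O(\rho)$ terms $c$, $\bar c_b$ and $\bar c$. I must check that they enter $D$ only at order $\rho^{3/2}$ or higher, so that the slope is determined purely by the first-order mean terms. I expect this to hold for two reasons. First, $\epsilon$ has leading order $\sqrt\rho$, so cross terms between the $\sqrt\rho$ and $\rho$ parts are $O(\rho^{3/2})$. Second, the linear term $\mathbb E_{P}[\epsilon]$ vanishes identically because both distributions sum to one. Since the output alphabet is finite, all remainders are uniform, which justifies interchanging the expansion and the differentiation.
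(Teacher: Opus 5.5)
Your proof is correct, but it takes a different route from the paper. The paper never expands the likelihoods itself. Instead it proceeds in three steps:
\begin{itemize}
\item It uses Proposition~1 of Martinez et al.\ to write $C\low{BICM}=\sum_i\frac12\sum_b\bigl(C\low{CM}(\mathcal X)-C\low{CM}(\mathcal X^i_b)\bigr)$.
\item It imports the low-SNR slope $\frac{1}{\pi}\tr(\bm V\bm Q\bm V^T)$ of the 1-bit quantized coded-modulation capacity from Mezghani and Nossek, and applies it to each of these sets.
\item It writes $\bm Q_{\mathcal X^i_b}=\bm R_{\mathcal X^i_b}-\bm\mu_{\mathcal X^i_b}\bm\mu_{\mathcal X^i_b}^T$ and uses $\bm Q_{\mathcal X}=\frac12\sum_b\bm R_{\mathcal X^i_b}$ (zero mean). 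All second-moment terms then cancel, and only the $\bm\mu\bm\mu^T$ terms survive.
\end{itemize}

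You instead compute $I(\bm y;w_i)$ directly as an average KL divergence and apply the local $\chi^2$ approximation. The zero-mean assumption enters only by removing the $\sqrt\rho$ term from $P(\bm y)$. As a result, second moments of the subsets never appear and no cancellation is needed.

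Your route is self-contained and elementary, and it has several advantages:
\begin{itemize}
\item It shows transparently why only the subset means matter: the leading term of the likelihood ratio is linear in $\bm\mu_{\mathcal X^i_b}$.
\item It makes explicit both the unit (nats) and where the factor $\frac{1}{2\pi}$ comes from.
\item It justifies the step from the expansion to the derivative at $\rho=0$ via the finite output alphabet; the paper leaves this to the cited results.
\end{itemize}
It also shows exactly where the uncorrelated noise is used, namely in $\mathbb E[\bm y\bm y^T]=\bm I_q$ under the uniform output law. In effect, you re-derive inline the special case of the Mezghani--Nossek slope that the paper needs.

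The paper's route is shorter given those citations. It also mirrors the standard BICM-to-CM decomposition used for unquantized channels, which makes clear that only first-order coded-modulation slopes of the form $\tr(\bm V\bm Q\bm V^T)$ are required.
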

\begin{proof}
Using Proposition 1 of \cite{martinez_bit-interleaved_2008}, we can formulate the \ac{bicm} capacity in terms of coded modulation capacities, i.e.,
\begin{align}\label{eq:lemma_1_proof_1}
	C\low{BICM} = \sum_{i=1}^p \frac{1}{2}\sum_{b=0}^1 \left(C\low{CM}(\mathcal{X}) - C\low{CM}(\mathcal{X}^i_b)\right),
\end{align}
where $C\low{CM}(\mathcal{X}) = I(\bm y; \bm x)\rvert_{\bm x\in \mathcal{X}}$ and \mbox{$C\low{CM}(\mathcal{X}^i_b) = I(\bm y; \bm x)\rvert_{\bm x\in \mathcal{X}^i_b}$} denote the coded modulation capacities for equiprobable signaling in the sets $\mathcal{X}$ and $\mathcal{X}^i_b$, respectively. Therefore, the first-order expansion coefficient $\alpha\low{BICM}$ can be expressed as
\begin{align}\label{eq:lemma_1_proof_2}
	\alpha\low{BICM} = \sum_{i=1}^p \frac{1}{2}\sum_{b=0}^1 \left(\alpha\low{CM}(\mathcal{X}) - \alpha\low{CM}(\mathcal{X}^i_b)\right),
\end{align}
with $\alpha\low{CM}(\mathcal{X})$ and $\alpha\low{CM}(\mathcal{X}^i_b)$ being the first-order coefficients of the Maclaurin series of the coded modulation capacities $C\low{CM}(\mathcal{X})$ and $C\low{CM}(\mathcal{X}^i_b)$, respectively.

Using Theorem 1 from \cite{mezghani_ultra-wideband_2007} (and multiplying by $\frac{1}{2}$, since we look at a real channel), we can determine the two coefficients as\looseness-1
\begin{align}\label{eq:lemma_1_proof_3}
	\alpha\low{CM}(\mathcal{X}) = \frac{1}{\pi}\tr\left(\bm V \bm Q_{\mathcal{X}} \bm V^T\right)
\end{align}
and
\begin{align}\label{eq:lemma_1_proof_4}
	 \alpha\low{CM}(\mathcal{X}^i_b) = \frac{1}{\pi}\tr\left(\bm V \bm Q_{\mathcal{X}^i_b} \bm V^T\right),
\end{align}
where $\bm Q_{\mathcal{X}}$ denotes the covariance matrix of the \ac{rll} words from set $\mathcal{X}$ and $\bm Q_{\mathcal{X}^i_b}$ is the covariance matrix for the subset of \ac{rll} words $\mathcal{X}^i_b$. The latter can be expressed as
\begin{align}\label{eq:lemma_1_proof_5}
	\bm Q_{\mathcal{X}^i_b} &= \mathbb{E}_{\bm x \in \mathcal{X}^i_b}\left[(\bm x - \bm \mu_{\mathcal{X}^i_b}) (\bm x - \bm \mu_{\mathcal{X}^i_b})^T\right] \\
	& = \bm R_{\mathcal{X}^i_b} - \bm \mu_{\mathcal{X}^i_b} \bm \mu_{\mathcal{X}^i_b}^T.
\end{align}
With this we can express \eqref{eq:lemma_1_proof_2} as
\begin{align}\label{eq:lemma_1_proof_6}
	&\alpha\low{BICM} \nonumber \\ 
	&\quad\!\!  =  \frac{1}{\pi} \tr\bigg(\bm V \sum_{i=1}^p \frac{1}{2}\sum_{b=0}^1 \Big(\bm Q_\mathcal{X} - \bm R_{\mathcal{X}^i_b} + \bm \mu_{\mathcal{X}^i_b} \bm \mu_{\mathcal{X}^i_b}^T\Big) \bm V^T\bigg).
\end{align}
Since we assume the constellation $\mathcal{X}$ to be zero-mean, the relationship
\begin{align}\label{eq:lemma_1_proof_7}
	\bm Q_\mathcal{X} = \frac{1}{2}\sum_{b=0}^1 \bm R_{\mathcal{X}^i_b}
\end{align}
between the correlation matrix $\bm R_{\mathcal{X}^i_b}$ and the covariance matrix $\bm Q_\mathcal{X}$ holds. Consequently, in \eqref{eq:lemma_1_proof_5} the two terms cancel each other out and we are left with
\begin{align}\label{eq:lemma_1_proof_8}
	\alpha\low{BICM} = \frac{1}{2\pi} \tr\bigg(\bm V \Big(\sum_{i=1}^p \sum_{b=0}^1 \bm \mu_{\mathcal{X}^i_b} \bm \mu_{\mathcal{X}^i_b}^T \Big)\bm V^T\bigg)
\end{align}
Rearranging \eqref{eq:lemma_1_proof_8} leads to the expression in \eqref{eq:lemma_1}.
\end{proof}

To better understand the implications of \cref{lemma:lemma1}, we continue to reformulate the result. Using the fact that the original constellation satisfies $\bm \mu_{\mathcal{X}}=0$, for the first moments of the constellation $\mathcal{X}^i_b$ 
\begin{align}
	\bm \mu_{\mathcal{X}^i_0} = - \bm\mu_{\mathcal{X}^i_1}
\end{align}
must hold. Consequently, we can replace the sum over $b$ in \eqref{eq:lemma_1} and only consider $\bm \mu_{\mathcal{X}^i_1}$, which then can be expressed as a weighted sum over \emph{all} \ac{rll} words, i.e.,
\begin{align}
	\bm \mu_{\mathcal{X}^i_1} = \frac{1}{2^{p-1}} \sum_{\bm x \in \mathcal{X}^i_1} \bm x = \frac{1}{2^p} \sum_{k=1}^{2^p} \bm x_{\sigma(k)} \tilde{w}_{(k),i},
\end{align}
where the weighting coefficients $\tilde{w}_{(k),i} \in \{-1, 1\}$ can be directly obtained from the data words $\bm w$ by 
\begin{align}
	\tilde{\bm w} = 2 \bm w - \bm 1_p,
\end{align}
i.e., by replacing all $0$'s with $-1$'s. We denote these as \emph{modified data words}. 

Using this, we can express the \ac{bicm} coefficient as
\begin{align}
	\alpha\low{BICM} = & \frac{1}{\pi 2^{2p}} \sum_{i=1}^p \sum_{k=1}^{2^p}\sum_{l=1}^{2^p} \tilde{w}_{(l),i} \tilde{w}_{(k),i}  \; \bm x^T_{\sigma(k)} \bm V^T\bm V \bm x_{\sigma(l)}.
\end{align}
Next, we make use of the fact that only the weighting coefficients depend on $i$. This allows to replace the summation over $i$ by an inner product, i.e., 
\begin{align}
	\sum_{i=1}^p \tilde{w}_{(l),i} \tilde{w}_{(k),i}  = \tilde{\bm w}_{(l)}^T \bm \tilde{\bm w}_{(k)}.
\end{align}
Thus, we are left with the final expression for the linear coefficient of the \ac{bicm} capacity
\begin{align} \label{eq:qap_statement_sums}
\alpha\low{BICM} = & \frac{1}{\pi 2^{2p}} \sum_{k=1}^{2^p}\sum_{l=1}^{2^p} \tilde{\bm w}_{(l)}^T \tilde{\bm w}_{(k)} \bm x^T_{\sigma(k)} \bm V^T\bm V \bm x_{\sigma(l)}.
\end{align}
Given the \ac{rll} word alphabet $\mathcal{X}$, finding the assignment $\sigma$ that maximizes the coefficient $\alpha\low{BICM}$ is a \acf{qap} \cite{koopmans_assignment_1957}, an optimization problem that is known to be NP-hard \cite{garey_computers_1979}.

\subsection{Construction of Optimal Input Sequences}
Finding the optimal solution to \acp{qap} is challenging in general, and because of the potentially large block length of \ac{rll} codes, in particular. Common optimization algorithms therefore do not try to find the global optimum, but instead aim at finding a \emph{sufficiently} good solution \cite{loiola_survey_2007}. Moreover, even the solution that maximizes \eqref{eq:qap_statement_sums} will only be optimal for the block channel model \eqref{eq:adapted_channel}, but not for the channel with inter-block interference, present in \ac{zxm} systems. Therefore, in this subsection, we change our point of view: Instead of finding an optimized assignment between data words and already existing \ac{rll} words, we aim at constructing a set of channel input words in $\mathbb{R}^q$ that maximize the linear coefficient of the Maclaurin series $\alpha\low{BICM}$ and that are not necessarily \ac{rll} words.\looseness-1 

For notational convenience, we rewrite \eqref{eq:qap_statement_sums} in matrix notation
\begin{align}\label{eq:qap_matrix_form}
	\alpha\low{BICM} = \frac{1}{\pi 2^{2p}}\tr\Big(\bm P \bar{\bm X}^T \bm M \bar{\bm X} \bm P^T \bm W^T \bm W\Big),
\end{align}
where $\bar{\bm X} = [\bm x_{(1)}, \bm x_{(2)}, \hdots, \bm x_{(2^p)}] \in \mathbb{R}^{q\times 2^p}$ is a matrix containing all channel input words as columns. Similarly, $\bm W = [\tilde{\bm w}_{(1)}, \tilde{\bm w}_{(2)}, \hdots, \tilde{\bm w}_{(2^p)}] \in \{-1,1\}^{p\times 2^p}$ is the matrix containing all modified data words in lexicographical order and, $\bm P \in \{0,1\}^{2^p \times 2^p}$ is a permutation matrix, which is an equivalent description of the assignment function $\sigma$. Lastly, the matrix $\bm M \in \mathbb{R}^{q\times q}$ is defined as $\bm M = \bm V^T \bm V$.
Since we are interested in constructing $\bar{\bm X}$, the permutation can be directly incorporated. In the following, we will therefore use \mbox{$\bm X = \bar{\bm X}\bm P^T$} and refer to $\bm X$ as the \emph{constellation matrix}. With this, the optimization problem becomes
\begin{align}\label{eq:opt_input_words}
	\bm X^* = \argmax_{\bm X} \tr\Big(\bm X^T \bm M \bm X \bm W^T \bm W\Big).
\end{align}

We start by giving an alternative proof of \cite[Th.~12]{agrell_bicm_2011}, stating that channel input words are only \ac{foo}, if they can be obtained by a linear function of the modified data words (i.e., an affine function of the data words). Different to \cite{agrell_bicm_2011}, our derivation does not make use of the structure of $\bm W$ (i.e., containing elements that are either $+1$ or $-1$) and, therefore, is slightly more general.
\begin{theorem}[{\cite[Th.~12]{agrell_bicm_2011}}]\label{theorem:optimal_input_constellation}
For the system model in \eqref{eq:adapted_channel} a constellation matrix $\bm{X}^*$ is only \ac{foo}, if there exists a linear projection $\bm G\in \mathbb{R}^{q\times p}$ of the modified data words $\tilde{\bm w} \in \{-1,1\}^p$ that fulfills
\begin{align}\label{eq:X_G_W}
	\bm X^* = \bm G \bm W.
\end{align}
\end{theorem}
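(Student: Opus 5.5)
The plan is to treat \eqref{eq:opt_input_words} as an optimization over $\bm X \in \mathbb{R}^{q\times 2^p}$ under the power constraint from \eqref{eq:constellation_moments}, i.e., $\tr(\bm X^T\bm X) = 2^p q E_s$. First-order optimality then means that $\bm X^*$ attains the largest value of $\tr(\bm X^T\bm M\bm X\bm W^T\bm W)$ among all zero-mean constellations with this energy. The key observation is that the objective depends on $\bm X$ only through its row space components lying in the row space of $\bm W$.

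To make this precise, let $\bm\Pi = \bm W^T(\bm W\bm W^T)^{-1}\bm W$ be the orthogonal projector onto the row space of $\bm W$. For generality, I would only assume that $\bm W$ has full row rank $p$ and would not use $\bm W\bm W^T = 2^p\bm I_p$. Every $\bm X$ then splits as $\bm X = \bm X\bm\Pi + \bm X(\bm I-\bm\Pi)$. Since $\bm W^T\bm W = \bm\Pi\,\bm W^T\bm W\,\bm\Pi$, cyclicity of the trace gives
\begin{align}
\tr\big(\bm X^T\bm M\bm X\bm W^T\bm W\big) = \tr\big((\bm X\bm\Pi)^T\bm M(\bm X\bm\Pi)\bm W^T\bm W\big).
\end{align}
So the component $\bm X(\bm I-\bm\Pi)$ does not contribute to the objective. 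It does, however, consume energy, because
\begin{align}
\tr(\bm X^T\bm X) = \tr(\bm\Pi\bm X^T\bm X\bm\Pi) + \tr\big((\bm I-\bm\Pi)\bm X^T\bm X(\bm I-\bm\Pi)\big).
\end{align}

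Next I would argue by contradiction. Suppose $\bm X^*$ is \ac{foo} but $\bm X^*(\bm I-\bm\Pi)\neq\bm 0$. Define $\bm X' = c\,\bm X^*\bm\Pi$ with $c>1$ chosen so that $\bm X'$ has the same energy as $\bm X^*$. This is possible as long as $\bm X^*\bm\Pi\neq \bm 0$. The objective is quadratic in $\bm X$, so it scales by $c^2$, and it is strictly positive: with $\bm Y=\bm X^*\bm\Pi$ it equals $\tr(\bm V\bm Y\bm W^T\bm W\bm Y^T\bm V^T)\ge 0$. This quantity is nonzero whenever $\bm V\bm Y\bm W^T\neq \bm 0$. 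Here I would invoke the passband condition \eqref{eq:channel_frequency_domain_requirement} to argue that $\bm V$ (hence $\bm M$) is nonsingular, so $\bm V\bm Y\bm W^T = \bm 0$ forces $\bm Y = \bm 0$.

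The degenerate case $\bm X^*\bm\Pi=\bm 0$ gives objective value zero. It is beaten by any $\bm G\bm W$ scaled to the same energy, for example with $\bm G$ having a single nonzero column. Either way $\bm X^*$ is strictly improved, which is a contradiction. Hence $\bm X^* = \bm X^*\bm\Pi = \bm G\bm W$ with $\bm G = \bm X^*\bm W^T(\bm W\bm W^T)^{-1}$, which is \eqref{eq:X_G_W}. The zero-mean constraint is automatically satisfied, because the all-ones vector is orthogonal to the rows of $\bm W$; the same holds for any $\bm W$ whose rows are zero-sum, as must be assumed.

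The main obstacle is justifying strict positivity, i.e., that $\bm M$ is positive definite (or at least that the objective cannot vanish on $\bm X\bm\Pi$). The passband condition only bounds $V(f)$, while $\bm V$ is a finite Toeplitz section with possibly near-zero stopband. I would try to lower-bound $\bm x^T\bm V\bm x$ through the spectral representation of the sampled symbol $v(kT/M_{\mathrm{Rx}})$. If that fails, I would weaken the claim to ``an optimum can always be taken of the form $\bm G\bm W$'', which needs only the non-strict inequality.
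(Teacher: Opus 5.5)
Your proposal is correct and follows the paper's argument: you split $\bm X$ into its row-space-of-$\bm W$ component and an orthogonal part that consumes energy without affecting $\alpha\low{BICM}$, and you make explicit the rescaling to full energy and the strict improvement that the ``only if'' direction needs, which the paper's ``choosing $\bm X = \bm X\low{i}$ maximizes'' leaves implicit. The positive definiteness of $\bm M$ that you flag as the main obstacle is not needed (and is hard to extract from \eqref{eq:channel_frequency_domain_requirement}, which only bounds $|V(f)|$): if $\bm X^*\bm\Pi$ gives zero objective then so does $\bm X^*$, so any $\bm G\bm W$ with $\tr(\bm G^T\bm M\bm G)>0$ beats it, and such a $\bm G$ exists whenever $\bm M\neq\bm 0$, exactly as in your degenerate case.
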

\begin{proof}
Any arbitrary choice for $\bm X$ can be decomposed as
\begin{align}
	\bm X = \bm X\low{i} + \bm X\low{o},
\end{align}
where $\bm X\low{i}$ denotes the part of $\bm X$ that lies within the row space of $\bm W$ and therefore can be represented as $\bm X\low{i} = \bm G \bm W$. On the other hand, $\bm X\low{o}$ denotes the part of $\bm X$ that is orthogonal to the row space of $\bm W$, therefore fulfills $\bm X\low{o} \bm W^T =\bm 0$. For \eqref{eq:qap_matrix_form}, this leads to 
\begin{align}
	\alpha\low{BICM} & =  \frac{1}{\pi 2^{2p}} \tr\Big((\bm X\low{i} + \bm X\low{o})^T \bm M (\bm X\low{i} + \bm X\low{o}) \bm W^T \bm W\Big) \nonumber\\
	& =\frac{1}{\pi 2^{2p}} \tr\Big(\bm X\low{i}^T \bm M \bm X\low{i} \bm W^T \bm W\Big). \label{eq:theorem1_alpha_bicm}
\end{align}

From \eqref{eq:constellation_moments}, we get a power constraint on $\bm X$, i.e.,
\begin{align}\label{eq:power_constraint_matrix}
	\tr(\bm X^T \bm X) = 2^p \cdot q\cdot E_s.
\end{align}
Moreover, we have 
\begin{align}
	\tr(\bm X^T \bm X) & = \tr\left((\bm X\low{i} + \bm X\low{o})^T (\bm X\low{i} + \bm X\low{o})\right) \\
	& = \tr(\bm X\low{i}^T \bm X\low{i}) + \tr(\bm X\low{o}^T \bm X\low{o}),\nonumber
\end{align}
because $\bm X\low{i}$ is orthogonal to $\bm X\low{o}$. Since $\tr(\bm X\low{o}^T \bm X\low{o})\geq 0$, choosing $\bm X = \bm X\low{i}$ maximizes \eqref{eq:theorem1_alpha_bicm}. This completes the proof.
\end{proof}

Using \cref{theorem:optimal_input_constellation} for the solution of \eqref{eq:opt_input_words}, the remaining question is how to optimally select the projection matrix $\bm G$. The projection matrix $\bm G$ must satisfy
\begin{align}\label{eq:energy_constraint_projection_matrix}
	\tr(\bm G^T \bm G) = q\cdot E_s,
\end{align}
which follows immediately from \eqref{eq:power_constraint_matrix}, \eqref{eq:X_G_W}, and the fact that $\bm W^T \bm W = 2^p \bm I_p$.
Using \cref{theorem:optimal_input_constellation} and \eqref{eq:energy_constraint_projection_matrix}, we can reformulate the optimization problem in \eqref{eq:opt_input_words} as a constrained optimization problem for finding the optimal projection matrix $\bm G^*$:\looseness-1
\begin{subequations}\label{eq:constrained_optimization}
\begin{alignat}{2}
\bm G^* =&  \argmax_{\bm G} &\quad& \tr(\bm G^T \bm M \bm G)\label{eq:optProb}\\
&\text{subject to} &      & \tr(\bm G^T \bm G) = q\cdot E_s.\label{eq:constraint1}
\end{alignat}
\end{subequations}

If $\bm M$ is known the optimal solution can be found easily: since $\bm M$ is symmetric, it has a spectral decomposition $\bm M = \bm U \bm \Lambda_M \bm U^T$, where $\bm U$ is an orthogonal matrix and $\bm \Lambda_M = \diag(\lambda_{M,1}, \hdots, \lambda_{M,q})$ denotes the matrix of eigenvalues. Without loss of generality, we assume the eigenvalues to be ordered descendingly, i.e., $\lambda_{M,1} \geq \lambda_{M,2} \geq \hdots \geq \lambda_{M,q}$. Then the optimal choice for $\bm G$ is the rank $1$ matrix
\begin{align}
	\bm G = \begin{pmatrix}
		\sqrt{q E_s} \bm u_1 & \bm 0 & \hdots & \bm 0
		\end{pmatrix},
\end{align}
where $\bm u_1$ is the eigenvector corresponding to the largest eigenvalue $\lambda_{M,1}$, leading to the optimal value $\alpha\low{BICM}^*= \frac{q E_s \lambda_{M,1}}{\pi 2^{2p}}$. However, this choice requires the exact knowledge of the matrix $\bm V$ and, thus, of the channel impulse response $v(t)$. As stated in \cref{sec:adapted_channel_model}, we assume that $v(t)$ is not known exactly, but only partial knowledge on $v(t)$ given by \eqref{eq:channel_frequency_domain_requirement} is available.\looseness-1

We will show in following that for asymptotically large block lengths $q$, the projection matrix $\bm G$ can be selected in a way that on average yields the optimal performance over the set of channels that fulfill \eqref{eq:channel_frequency_domain_requirement}.
\begin{theorem}\label{th:sinc}
	Assuming an unknown channel impulse response $v(t)$ that fulfills \eqref{eq:channel_frequency_domain_requirement}, for $q \to \infty$ with $\sfrac{p}{q}=\mathrm{const.}$, an optimal choice for the projection matrix $\bm G$ is given by
	\begin{align}\label{eq:theorem_2}
		\left[\bm G^*\right]_{k,l} = \sqrt{E_s} \cdot \sinc\left(\frac{p}{q} (k-1) - (l-1)\right).
	\end{align}
\end{theorem}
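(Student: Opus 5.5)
Since the channel is unknown and only the low-pass mask \eqref{eq:channel_frequency_domain_requirement} is given, the natural route is asymptotic spectral analysis of the Toeplitz matrix $\bm M=\bm V^T\bm V$. Because $\bm V$ is a symmetric Toeplitz matrix generated by the absolutely integrable $v(t)$, Szeg\H{o}-type results (Grenander--Szeg\H{o}, or Gray's asymptotically equivalent circulant argument) let us replace $\bm V$ for $q\to\infty$ by a circulant matrix diagonalized by the DFT. Its eigenvalues are samples of the discrete-time spectrum $\tilde V(\theta)=\sum_m v(mT/M\low{Rx})e^{-j\theta m}$, and those of $\bm M$ are $|\tilde V(\theta)|^2$. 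With $M=1$ we have $M\low{Rx}=M\low{Tx}$, so the passband $|f|\le \frac{1-\beta}{2T}$ maps to normalized frequencies $|\theta|\lesssim \pi/M\low{Tx}$. The $d$-constraint $d=M\low{Tx}-1$ gives $C\low{RLL}(d)\le 1/M\low{Tx}$, so with $p/q\le C\low{RLL}(d)$ the band $|\theta|\le\pi p/q$ lies inside the passband (up to the roll-off). In the limit, $\tr(\bm G^T\bm M\bm G)$ therefore becomes an integral of $|\tilde V(\theta)|^2$ weighted by the energy spectrum of the columns of $\bm G$.

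Step 1: Rewrite the objective of \eqref{eq:constrained_optimization} as $\sum_{l=1}^p \bm g_l^T\bm M\bm g_l$, and approximate each term by $\frac{1}{2\pi}\int |\tilde V(\theta)|^2|\tilde G_l(\theta)|^2\d\theta$. Here $\tilde G_l$ is the DTFT of column $\bm g_l$, and the error vanishes relative to $q$ by the Szeg\H{o}/Gray equivalence, using absolute integrability of $v$.

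Step 2: Since $\bm V$ is unknown and $|V(f)|$ is only guaranteed to be near $1$ in the passband and small in the stopband, we require the design to be optimal on average (or worst case) over the mask. This holds when all energy of $\bm G$ sits inside the guaranteed passband. Rather than concentrating it on the single top eigenvector $\bm u_1$, which depends on the unknown $v$, we spread it uniformly over the lowest $p$ of the $q$ DFT-like degrees of freedom, i.e.\ the subspace of sequences bandlimited to $|\theta|\le \pi p/q$. Any $\bm G$ whose column space lies in this subspace achieves $\tr(\bm G^T\bm M\bm G)\approx qE_s$, up to the passband ripple, which is the best attainable under the mask when averaged over ripple patterns. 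Energy placed outside the band is penalized, since there $|V|$ may be as small as $\delta_s$.

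Step 3: Exhibit an orthogonal basis of that bandlimited subspace with the right normalization. The shifted sinc sequences $k\mapsto\sinc(\frac{p}{q}(k-1)-(l-1))$ for $l=1,\dots,p$ are sampled versions of a signal bandlimited to $\pi p/q$. By the Nyquist/Shannon sampling theorem they are mutually orthogonal, and each has squared norm $\to q/p$ as $q\to\infty$. Scaling by $\sqrt{E_s}$ then gives $\tr(\bm G^T\bm G)\to p\cdot E_s\, q/p=qE_s$, which satisfies \eqref{eq:energy_constraint_projection_matrix}. By Step 2 this $\bm G$ is optimal.

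The main obstacle is making Step 2 rigorous. One has to specify precisely in what sense ``optimal for an unknown channel'' is meant (average over the admissible $v$, or a minimax statement), and control the transition band $\frac{1-\beta}{2T}<f<\frac{1+\beta}{2T}$, where $p/q$ may reach. I would first try to show that energy outside $|\theta|\le\pi p/q$ cannot improve the guaranteed value, and that the finite-$q$ truncation of the sinc tails contributes $o(q)$ error, using square-summability of the sinc and the Szeg\H{o} limit.
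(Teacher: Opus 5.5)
You follow the paper's overall route: asymptotic circulant equivalence of $\bm M$, restriction to the span of the $p$ lowest-frequency DFT vectors, and a unitary rotation to a sinc-shaped basis. But your justification for that subspace in Step~2 rests on a false inequality. It is not true that $d=M\low{Tx}-1$ gives $C\low{RLL}(d)\le 1/M\low{Tx}$. At $z=2^{1/(d+1)}$ one has $z^{d+1}=2<z^d+1$, so the largest root of $z^{d+1}=z^d+1$ exceeds $2^{1/(d+1)}$ and $C\low{RLL}(d)>1/(d+1)$ for every $d$. Compare $0.694>1/2$ in \cref{tab:rll_details}; the TS-RLL code for $d=1$ uses $p/q=2/3$. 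With $M=1$ the guaranteed passband is $|\theta|\le\pi(1-\beta)/M\low{Tx}$, and the band $|\theta|\le\pi p/q$ of your sinc columns sticks out of it. For $d=1$, $\beta=0.5$ it reaches about $0.67\pi$ against a passband edge of $0.25\pi$. For $d=4$, $p/q=14/37$ it even passes the stopband edge $0.3\pi$. So it is false that every $\bm G$ with columns in this subspace attains $\tr(\bm G^T\bm M\bm G)\approx qE_s$, and the sinc design violates your own criterion of keeping all energy in the guaranteed passband. Worse, under the trace constraint \eqref{eq:constraint1} alone, the rank-one choice $\sqrt{qE_s}\,\bm f_1$ (all energy at DC) does not depend on the unknown $v$. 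It keeps the passband guarantee for every admissible channel, and it strictly beats the sinc design, on average or in the worst case, whenever the band leaves the passband. Your reason for discarding concentrated solutions therefore does not hold. In the regime $p/q\approx C\low{RLL}(d)$, your formulation cannot deliver the statement; it would make it false.

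What you are missing is a structural restriction that the paper uses implicitly when it builds $\bm G^*$ from $p$ distinct DFT columns. $\bm G$ must have $p$ orthogonal columns of equal energy, $\bm G^H\bm G=\frac{qE_s}{p}\bm I_p$, so that every bit occupies its own dimension. Recall that, as noted right after \eqref{eq:constrained_optimization}, the problem with only the trace constraint is solved by a rank-one $\bm G$. Under this constraint, apply Ky Fan's maximum principle in the asymptotically diagonalizing DFT basis. It yields the span of the eigenvectors of the $p$ largest eigenvalues, which for a low-pass $|V(f)|$ are the $p$ lowest frequencies, with no need for the band to fit in the passband. The rest of your plan is then fine and parallels the paper. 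The paper uses $\tr(\bm Q^H\bm G^H\bm M\bm G\bm Q)=\tr(\bm G^H\bm M\bm G)$ with a $p$-point IDFT $\bm Q$ to obtain a Dirichlet kernel and takes the small-angle limit. Your sampling-theorem construction reaches the sinc basis directly. Note only that the truncated sinc columns are just asymptotically orthogonal and asymptotically in that span; edge columns such as $l=1$ carry only about half of $q/p$. The total defect is of order $\frac{q}{p}\ln p=o(q)$, which suffices for the normalized problem.
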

\begin{proof}
We start by relaxing the constraint on $\bm G$ being a real matrix and allow $\bm G\in \mathbb{C}^{q\times p}$ and, consequently, also $\bm X \in \mathbb{C}^{q\times 2^p}$.
For infinite block length, the optimization problem from \eqref{eq:constrained_optimization} can be reformulated as
\begin{subequations}\label{eq:constrained_optimization_infinite}
\begin{alignat}{2}
	&\bm G^* =  \lim_{q\to \infty}  \argmax_{\bm G}  \frac{1}{q} \tr(\bm G^H \bm M \bm G)\label{eq:optProb_inf}\\
&\text{subject to} \quad  \lim_{q\to \infty} \frac{1}{q} \tr(\bm G^H \bm G) = E_s.\label{eq:constraint1_inf}
\end{alignat}
\end{subequations}
Note that the normalization factor $1/q$ is needed, as $\tr(\bm G^H \bm M \bm G)$ diverges as $q \to \infty$.

As stated in \cref{sec:adapted_channel_model}, $\bm V$ is a Toeplitz matrix with absolutely summable rows. Following \cite[Lemma 4.6]{gray_toeplitz_2005}, there exists a sequence of circulant matrices that is asymptotically equivalent to $\bm V$. Moreover, from \cite[Theorem 5.3]{gray_toeplitz_2005}, we know that also $\bm M = \bm V^T\bm V$ is asymptotically equivalent to a sequence of circulant matrices. Under the additional assumptions that $\bm G \bm G^H$ i) is bounded in strong norm, i.e.,
\begin{align}\label{eq:strong_norm_boundedness}
	\lim_{\substack{q\to\infty\\\sfrac{p}{q}=\mathrm{const.}}}\max_{\bm \xi \neq \bm 0} \frac{\lVert \bm G\bm G^H \bm \xi\rVert_2}{\lVert \bm \xi \rVert_2} < \infty,
\end{align}
and ii) also has a sequence of asymptotically equivalent circulant matrices, an equivalent expression for \eqref{eq:optProb_inf} is given by\looseness-1
\begin{align}
	\bm G^* & =  \lim_{q\to \infty} \argmax_{\bm G} \frac{1}{q} \tr(\bm G^H \bm F \bm D \bm F^H \bm G).
\end{align}
Here, $\bm F \bm D \bm F^H$ is asymptotically equivalent to $\bm M$, i.e., $\bm F\bm D \bm F^H \sim \bm M$, with $\bm F$ being a \ac{dft} matrix defined as
\begin{align}
	\bm F = \begin{pmatrix}\bm f_1 & \hdots & \bm f_{q} \end{pmatrix} \text{ with } [\bm f_k]_l = \frac{1}{\sqrt{q}} e^{-j2\pi \frac{(k-1)(l-1)}{q}},
\end{align}
with $k,l = 1, \hdots, q$. Further, this definition of $\bm F$ fixes the location of the eigenvalues in diagonal matrix $\bm D$: Eigenvalues corresponding to low frequencies (positive and negative) are located at the beginning and at the end of the main diagonal, while eigenvalues corresponding to high frequencies are located in the middle. 

Since the channel has a general low-pass characteristic, cf. \eqref{eq:channel_frequency_domain_requirement}, the best choice for $\bm G$ is to construct it based on the eigenvectors of $\bm F$ that correspond to the $p$ lowest (positive and negative) frequencies, i.e., 
\begin{align}
	\bm G^* = \sqrt{\frac{E_s q}{p}} \begin{pmatrix} \bm f_{1} & \!\!\hdots\!\! & \bm f_{\lceil \frac{p-1}{2}\rceil+1} & \bm f_{q-\lfloor \frac{p-1}{2}\rfloor+1}  & \!\!\hdots\!\! & \bm f_{q} \end{pmatrix},
\end{align}
which satisfies the two assumptions, i.e., \eqref{eq:strong_norm_boundedness} and the existence of an asymptotically equivalent sequence of circulant matrices.

Further, any unitary transformation applied to $\bm G^*$ does not change the result of the optimization, because
\begin{align}
	\tr(\bm Q^H\bm G^H \bm F \bm D \bm F^H \bm G \bm Q) = \tr(\bm G^H \bm F \bm D \bm F^H \bm G),
\end{align}
given that $\bm Q \bm Q^H = \bm I_p$. Therefore, if we select $\bm Q$ as an \ac{idft} matrix of size $p\times p$, we obtain
\begin{align}
	[\bm G^* \bm Q]_{k,l} & = [\bar{\bm G}^*]_{k,l} \\
	& = \frac{\sqrt{E_s}}{p} \Bigg(\sum_{m=0}^{\lceil\frac{p-1}{2}\rceil} e^{-j 2\pi \frac{(k-1)m}{q}} e^{j2\pi \frac{m(l-1)}{p}} \nonumber \\
	&\quad + \sum_{m=\lceil\frac{p-1}{2}\rceil +1}^{p-1} e^{-j2\pi \frac{(k-1)(m-p+q)}{q}} e^{j 2\pi \frac{m(l-1)}{p}}\Bigg).
\end{align} 
Substituting $m' = m-p$ in the second term leads to 
\begin{align}
	[\bar{\bm G}^*]_{k,l} & = \frac{\sqrt{E_s}}{p}  \Bigg(\sum_{m=0}^{\lceil\frac{p-1}{2}\rceil} e^{-j 2\pi \frac{(k-1)m}{q}} e^{j2\pi \frac{m(l-1)}{p}} \nonumber \\
	& \qquad +  \sum_{m'=-\lfloor\frac{p-1}{2}\rfloor}^{-1} \!\!\!e^{-j2\pi \frac{(k-1)(m'+q)}{q}} e^{j 2\pi \frac{(m'+p)(l-1)}{p}}\Bigg) \\
	& = \frac{\sqrt{E_s}}{p}  \sum_{m = -\lfloor\frac{p-1}{2}\rfloor}^{\lceil\frac{p-1}{2}\rceil} e^{-j 2\pi \frac{(k-1)m}{q}} e^{j2\pi \frac{m(l-1)}{p}}. \label{eq:dirichlet_series}
\end{align}
In case that $p$ is odd, the sum in \eqref{eq:dirichlet_series} is symmetric around zero and the expression is the definition of the Dirichlet kernel. If $p$ is an even number, the sum is not symmetric and has one element more that belongs to the positive frequencies. However, for $q\to \infty$ (and consequently also $p\to \infty$) the contribution of this element becomes negligible. Thus, we have
\begin{align}
	\lim_{\substack{q\to \infty,\\ \sfrac{p}{q}=\mathrm{const.}}} [\bar{\bm G}^*]_{k,l} & = \lim_{\substack{q\to \infty,\\ \sfrac{p}{q}=\mathrm{const.}}} \frac{\sqrt{E_s}}{p} \!\!\sum_{m = -\lfloor\frac{p-1}{2}\rfloor}^{\lceil\frac{p-1}{2}\rceil} \!\!\!e^{-j 2\pi m \left(\frac{k-1}{q}-\frac{l-1}{p}\right)} \\
	& = \lim_{\substack{q\to \infty,\\ \sfrac{p}{q}=\mathrm{const.}}} \frac{\sqrt{E_s}}{p} \frac{\sin\left(\pi p\left(\frac{k-1}{q}-\frac{l-1}{p}\right)\right)}{\sin\left(\pi\left(\frac{k-1}{q}-\frac{l-1}{p}\right)\right)}.
\end{align}
Finally, by using the small-angle approximation, we obtain the result in \eqref{eq:theorem_2}.
\end{proof}
For the assignment scheme optimization of \ac{rll} codes \cref{th:sinc} has interesting implications: it shows that, using $\bm G^*$, the influence of each bit on the transmit signal is localized in time. This motivates the use of the projection matrix also for the system with finite block lengths and inter-block interference. 

\subsection{Finding an RLL Code}
In general, valid \ac{rll} words cannot be obtained by a linear projection of the modified data words and, thus, \ac{rll} codes are not \ac{foo}. There is a special case, when the \ac{rll} code rate satisfies $\sfrac{q}{p}= n \geq d+1$ with $n \in \mathbb{N}$. Then the rate $1/n$ repetition code is also a valid \ac{rll} code, but these cases are usually not of interest because of the low code rates that can be achieved. 

However, \cref{th:sinc} can still be leveraged for the optimization of the assignment scheme of \ac{rll} codes, even including the selection of \ac{rll} words, which we have excluded thus far: It is reasonable to assume that if we apply the matrix in \eqref{eq:theorem_2} to transform modified data words with a finite but sufficiently large $p$, we obtain $q$-dimensional channel input words that are likely still close to first-order optimality, as the $\sinc$-function decays over time. These are not \ac{rll} words, but they reside $\mathbb{R}^q$ such that we can define a meaningful distance to the \ac{rll} words. Denoting the set of \emph{all available} \ac{rll} words by $\mathcal{\tilde X}$, with $|\mathcal{\tilde X}|\geq 2^p$, we can define a $2^p \times |\mathcal{\tilde X}|$ cost-matrix
\begin{align}
	[\bm K]_{i,j} = \left\lVert \bm G^* \tilde{\bm w}_{(i)} - \bm x_{(j)} \right\rVert_2,
\end{align}
which contains the Euclidean distances between all channel input words and the \ac{rll} code words of length $q$. This matrix can then be used to solve the \ac{lap}
\begin{subequations}\label{eq:lap}
\begin{alignat}{2}\label{eq:lap_1}
      \bm \Pi^* = &\argmin_{\bm \Pi}   &\quad& \tr(\bm K \bm \Pi)\\
        &\text{subject to} & & \bm \Pi^T \bm 1_{\lvert \mathcal{\tilde X}\rvert} = \bm 1_{2^p},  \\
        &                  & & \left[\bm \Pi \bm 1_{2^p}\right]_{i} \in \{0,1\}, \label{eq:opt_ineq}\\
        &                  & & [\bm \Pi]_{i,j} \in \{0,1\},
\end{alignat}
\end{subequations}
which minimizes the total sum of all Euclidean distances. In contrast to \acp{qap}, \acp{lap} have the benefit that they are efficiently solvable, e.g., by the \emph{Hungarian algorithm} \cite{kuhn_hungarian_1955}. In case that $|\mathcal{\tilde X}|>2^p$, the optimization problem becomes an unbalanced \ac{lap} and some \ac{rll} words are not assigned any data word, such that this optimization approach combines both, \ac{rll} word selection and assignment scheme optimization. Consequently, in \eqref{eq:lap}, the matrix $\bm \Pi \in \{0,1\}^{|\mathcal{\tilde X}|\times 2^p}$ denotes a permutation and selection matrix, which describes the assignment function $\tilde{\sigma}$ of an \ac{rll} code, cf. \cref{def:definition_1}. Finally, we would like to point out that this approach is a common approximation for \acp{qap}, also known as spectral relaxation, and was introduced in \cite{umeyama_eigendecomposition_1988}.

\subsection{Numerical Results}\label{sec:bicm_and_rll_numerical_results}
Thus far, we have investigated the influence of the assignment scheme of state-independent \ac{rll} codes on the slope of the \ac{bicm} capacity at \ac{snr} $\rho=0$ for a block-wise independent channel model, i.e., $\alpha\low{BICM}$ according to \eqref{eq:alpha_bicm_definition}. Thereby, the transformation matrix \eqref{eq:theorem_2} was derived under the assumption of infinite block length, such that the optimization in \eqref{eq:lap} for finite block length is an approximation. To assess how the assignment scheme influences the system performance when i) block sizes are finite and the channel additionally exhibits inter-block interference spanning across adjacent \ac{rll} words and ii) the \ac{snr} is larger than $0$, we evaluate the lower bound on the \ac{bicm} capacity \eqref{eq:bicm_lb_6} in dependence on $\alpha\low{BICM}$.

We start by optimizing the assignment scheme of a state-independent \ac{rll} code according to \eqref{eq:lap} and evaluate the \ac{bicm} capacity lower bound according to \eqref{eq:bicm_lb_6} as well as the corresponding maximum $\alpha\low{BICM}$, denoted as $\alpha\low{BICM,\, max}$. We use this assignment scheme as a reference point. Next, we modify the assignment scheme by iterative random permutation of two \ac{rll} words at a time, to obtain \ac{rll} codes with reduced $\alpha\low{BICM}$. The procedure is described in \cref{alg:linearity_eval}.

\begin{algorithm}
\caption{Adjust RLL Code Linearity}\label{alg:linearity_eval}
\begin{algorithmic}[1]
\Procedure{AdjustLinearity}{$\bm X, \bm M, \bm W, \alpha_\mathrm{rel, goal}, p$}
    \State $\alpha\low{BICM,\, max} \gets \frac{1}{2^{2p} \pi} \tr(\bm X^H \bm M \bm X \bm W^H\bm W)$ 
    \State $\bm X\low{cur} \gets \bm X$
    \State $\alpha\low{cur} \gets \alpha\low{BICM,\, max}$
    \State $\alpha\low{rel, cur} \gets 1$
    \State $t\low{max} \gets 10^6$
    \State $t \gets 1$
    
    \While{$(|\frac{\alpha\low{rel, cur}}{\alpha\low{rel, goal}} - 1| > 0.025)$ \textbf{and} $t < t\low{max}$}
        \State $\{i, j\} \gets \text{random indices from } \{1, \dots, 2^p\}$
        \State $\text{Swap columns } i \text{ and } j \text{ in } \bm X\low{cur}$
        
        \State $\alpha_\mathrm{test} \gets \frac{1}{2^{2p} \pi} \tr(\bm X_\mathrm{cur}^H \bm M \bm X_\mathrm{cur} \bm W^H \bm W)$
        \State $\alpha_\mathrm{rel, test} \gets \alpha_\mathrm{test} / \alpha\low{BICM,\, max}$
        
        \If{$|\alpha_\mathrm{rel, test} - \alpha_\mathrm{rel, goal}| < |\alpha_\mathrm{rel, cur} - \alpha_\mathrm{rel, goal}|$}
            \State $\alpha_\mathrm{cur} \gets \alpha_\mathrm{test}$
            \State $\alpha_\mathrm{rel, cur} \gets \alpha_\mathrm{rel, test}$
        \Else
            \State $\text{Swap columns } i \text{ and } j \text{ back in } \bm X_\mathrm{cur}$
        \EndIf
        
        \State $t \gets t + 1$
    \EndWhile
    
    \State \Return $\bm X\low{cur}$, $\alpha_\mathrm{rel, cur}$, $\alpha_\mathrm{cur}$
\EndProcedure
\end{algorithmic}
\end{algorithm}

\begin{figure}
    \centering
    \input{Figures/plot_normalized_mi_rate_vs_linearity}
    \caption{The normalized \ac{bicm} capacity lower bound over the normalized first-order coefficient of the \ac{bicm} capacity for different \acp{snr} and a state-independent \ac{rll} code with $d=1$ and $R\low{RLL}=0.6$.}
    \label{fig:mi_vs_linearity}
\end{figure}

We use the algorithm to obtain $81$ \ac{rll} codes with different assignment schemes such that the corresponding normalized first-order coefficient of the \ac{bicm} capacity, i.e., $\alpha\low{BICM}/\alpha\low{BICM,\, max}$, is uniformly spaced in $[0.2, 1]$. By evaluating the \ac{bicm} capacity lower bound $\hat{C}\low{BICM}$ for each of the \ac{rll} codes at different \acp{snr}, we can obtain insights into how $\alpha\low{BICM}$ also influences the system performance at higher \ac{snr} and in the \ac{zxm} channel with inter-block interference. To show the results in a single diagram, we plot $\hat{C}\low{BICM}\rvert_{\alpha\low{BICM}}/\hat{C}\low{BICM}\rvert_{\alpha\low{BICM,\, max}}$ over $\alpha\low{BICM}/\alpha\low{BICM,\, max}$. 

To obtain the results presented in \cref{fig:mi_vs_linearity}, we used a state-independent \ac{rll} code with $d=1$, $p=10$, $q=16$, and $R\low{RLL}=0.6$, but similar results have been obtained for other configurations as well. For the transmit and receive filter, we chose \ac{rrc} filters with a roll-off factor of $\beta = 0.5$. The effective oversampling factor has been selected as $M=\frac{M\low{Rx}}{M\low{Tx}}=1$ and, to obtain the histograms to estimate the \ac{bicm} capacity lower bound according to \eqref{eq:bicm_lb_6} and \eqref{eq:bicm_bound_numerical}, we used $N\low{bin}=256$ histogram bins and at least $n=10^5$ \ac{rll} symbols (next integer divisible by $p$). We also averaged the results over 10 independent simulation runs.

From \cref{fig:mi_vs_linearity}, we can see that at an \ac{snr} of $\SI{20}{\decibel}$ $\hat{C}\low{BICM}$ is independent of $\alpha\low{BICM}$ and, therefore, independent of the assignment scheme. This is to be expected, as the assignment scheme only influences the \emph{robustness} of the \ac{rll} code against channel noise. At $\mathrm{SNR} = \SI{-10}{\decibel}$ the \ac{bicm} capacity is linearly dependent on $\alpha\low{BICM}$. This confirms our assumption, that the optimization for the block channel in \eqref{eq:adapted_channel} with large block length is also beneficial for channels with inter-block interference. That the \ac{bicm} capacity at $\mathrm{SNR}=\SI{5}{\decibel}$ is still linearly dependent on $\alpha\low{BICM}$ is somewhat surprising. A possible explanation might be that the higher order terms of the Maclaurin series of the \ac{bicm} capacity are independent of the assignment scheme or only weakly dependent, however, the investigation of this is beyond the scope of this work. These results underline that the first-order coefficient $\alpha\low{BICM}$ of the \ac{bicm} capacity of the block channel model \eqref{eq:adapted_channel} is a suitable measure to predict and optimize the performance of a given \ac{rll} code.

%% file: Figures/plot_normalized_mi_rate_vs_linearity.tex
% !TEX root = ../main.tex
% !TEX program=pdflatex
% !TEX options=--shell-escape

\pgfplotstableread[col sep = comma]{Figures/plot_linearity_data_codeType=block_mtx=2.csv}\mydata
\pgfkeys{/pgf/number format/.cd,1000 sep={\,}}
\pgfplotsset{
    compat=newest,
    /pgfplots/legend image code/.code={%
        \draw[mark repeat=3,mark phase=3,#1] 
            plot coordinates {
                (0cm,0cm) 
                (0.4cm,0cm)
                (0.2cm,0cm)
            };
    },
}
\begin{tikzpicture}
\definecolor{mycolor1}{rgb}{0.00000,0.400,0.700}%
\definecolor{mycolor2}{rgb}{0.9000,0.400,0.150}%
\definecolor{mycolor3}{rgb}{0.9500,0.800,0.1500}%
\definecolor{mycolor4}{rgb}{0.49400,0.18400,0.55600}%
\definecolor{mycolor5}{rgb}{0.46600,0.67400,0.18800}%
\definecolor{mycolor6}{rgb}{0.30100,0.74500,0.93300}%
\definecolor{mycolor7}{rgb}{0.63500,0.07800,0.18400}%

\begin{axis}[xlabel={$\alpha\low{BICM}/ \alpha_\mathrm{BICM,\, max}$},
             ylabel={$\frac{\hat{C}\low{BICM}\rvert_{\alpha\low{BICM}}}{\hat{C}\low{BICM}\rvert_{\alpha_\mathrm{BICM,\, max}}}$},
             width=\columnwidth,
             height=0.6\columnwidth,
	         xmin=0.2, xmax=1,
	         ymin=0.1, ymax=1.05,
	         ymajorgrids, yminorgrids,
	         xmajorgrids, xminorgrids,
             legend columns=2,
	         legend style={at={(1,0)}, anchor=south east, /tikz/column 2/.style={column sep=5pt}},
	         legend cell align={left}]

\addlegendimage{color=mycolor1, only marks, mark size=2pt, mark=o, mark options={solid, fill=mycolor1, draw=mycolor1}}
\addlegendentry{\footnotesize{$\mathrm{SNR}=\SI{-10}{\decibel}$}}
\addlegendimage{color=mycolor2, only marks, mark size=2pt, mark=square, mark options={solid, fill=mycolor2, draw=mycolor2}}
\addlegendentry{\footnotesize{$\mathrm{SNR}=\SI{5}{\decibel}$}}
\addlegendimage{color=mycolor5, only marks, mark size=2pt, mark=diamond, mark options={solid, fill=mycolor5, draw=mycolor5}}
\addlegendentry{\footnotesize{$\mathrm{SNR}=\SI{20}{\decibel}$}}

% s = -10 dB     
\addplot[color=mycolor1, only marks, mark=o, mark size=1pt,
    restrict expr to domain={\thisrow{snr}}{-10:-10},   % Filter data
    unbounded coords=jump                               % If filtering creates empty points, prevent drawing lines to 0,0:
] table [col sep=comma, x=alphaRel, y=normalizedMI] {\mydata};

% s = 5 dB
\addplot[color=mycolor2, only marks, mark=square, mark size=1pt,
    restrict expr to domain={\thisrow{snr}}{5:5},   % Filter data
    unbounded coords=jump                               % If filtering creates empty points, prevent drawing lines to 0,0:
] table [col sep=comma, x=alphaRel, y=normalizedMI] {\mydata};

% s = 20 dB
\addplot[color=mycolor5, only marks, mark=diamond, mark size=1pt,
    restrict expr to domain={\thisrow{snr}}{20:20},   % Filter data
    unbounded coords=jump                               % If filtering creates empty points, prevent drawing lines to 0,0:
] table [col sep=comma, x=alphaRel, y=normalizedMI] {\mydata};

\end{axis}
\end{tikzpicture}

%% file: 5_ts_rll_codes.tex
% !TEX root=main.tex
% !TEX program=pdflatex
% !TEX options=--shell-escape

\section{Two-State RLL Codes}\label{sec:ts_rll_codes}
The previous section dealt with the optimization of the assignment scheme of state independent \ac{rll} codes, i.e., \ac{rll} codes where the assignment scheme is comprised of a single function $\tilde{\sigma}$, cf. \cref{def:definition_1}. These codes can be represented by simple two-column look-up tables: given the $p$ input bits, the $q$ output \ac{rll} symbols are uniquely determined. While state independent \ac{rll} codes offer great simplicity for implementation, e.g., the encoding can be parallelized in hardware without much effort, they come with the disadvantage that in order to satisfy the \ac{rll} constraint also when blocks are concatenated, $d+1$ alike symbols at the beginning and at the end of each \ac{rll} word are required. For a given block length, this reduces the achievable code rate. \looseness-1

Differently, state-dependent \ac{rll} codes can be represented by a \ac{fsm}, where for each state a three-column look-up table defines i) which data word is mapped to which \ac{rll} word and ii) which encoder state follows next. With a proper selection of the \ac{fsm} structure---as can be formalized by the state splitting algorithm \cite{adler_algorithms_1983}---\ac{rll} codes without the need of redundant symbols in the \ac{rll} words can be obtained, enabling high code rates at small block lengths. However, this approach usually results in encoders with many states, rendering the optimization of the assignment scheme very complicated. The \ac{rll} codes proposed in \cite{neuhaus_zero_crossing_2021} fall into this category. In the remainder of this work, we refer to them as \ac{msrll} codes to distinguish them from \ac{tsrll} codes. The latter are described in the following and have been originally introduced in \cite{zeitz_2025_pimrc}.

\subsection{The Concept of Two-State RLL Codes}
\Acf{tsrll} codes are a middle ground between state independent \ac{rll} codes and \ac{msrll} codes. As the name suggests, \ac{tsrll} codes possess two states, termed $S_+$ and $S_-$. As in \ac{msrll} codes, each state defines a look-up table assigning data words to \ac{rll} words and defining the corresponding next state. To ensure that the \ac{rll} constraint is satisfied for all possible concatenations of \ac{rll} words, all \ac{rll} words are bound to end with $d+1$ alike symbols. Transitions between states are defined by the following rule: If the last $d+1$ symbols of an \ac{rll} word are $-1$, the next state is $S_-$ and vice versa. Therefore, \ac{rll} words in state $S_-$ can either start with an arbitrary number of $-1$s or must start with at least $d+1$ symbols of value $+1$. \Cref{tab:tsrll_code_example} shows an example of a \ac{tsrll} code.
\newcommand*{\sh}{\ensuremath{\vphantom{\rll{\pmb s}{1}{}{}{}}}}
\newcommand*{\ch}{\ensuremath{\vphantom{\rll{\pmb x}{1}{}{}{}}}}
\begin{table*}[t]
\centering
\caption{Example for a \ac{tsrll} block code for $d=2$ with rate $R\low{RLL}=\frac{p}{q}=\frac{3}{7}$ and efficiency $\eta=77.78\%$}\label{tab:tsrll_code_example}
\begin{tabular}{|c|c|c||c|c|c|}
\hline
\multicolumn{3}{|c||}{$S_+$} & \multicolumn{3}{c|}{$S_-$}\\ \hline
\makecell{data \\words $\pmb w$} & \ac{rll} words $\bm x$ & \makecell{next\\ state} & \makecell{data\\words $\pmb w$} & \ac{rll} words $\bm x$ & \makecell{next\\state} \\
\hline
$\begin{matrix*}[c]
000 \sh\\
001 \sh\\
010 \sh\\
011 \sh\\
100 \sh\\
101 \sh\\
110 \sh\\
111 \sh\\
-   \sh
\end{matrix*}$
&

$\begingroup % keep the change local
\setlength\arraycolsep{1pt}
\begin{matrix*}[r] 
-1\sh & -1\sh & -1\sh & -1\sh & -1\sh & -1\sh & -1\sh \\
-1\sh & -1\sh & -1\sh & -1\sh &  1\sh &  1\sh &  1\sh \\
 1\sh &  1\sh &  1\sh & -1\sh & -1\sh & -1\sh & -1\sh \\
-1\sh & -1\sh & -1\sh &  1\sh &  1\sh &  1\sh &  1\sh \\
 1\sh &  1\sh & -1\sh & -1\sh & -1\sh & -1\sh & -1\sh \\
 1\sh & -1\sh & -1\sh & -1\sh &  1\sh &  1\sh &  1\sh \\
 1\sh &  1\sh &  1\sh &  1\sh & -1\sh & -1\sh & -1\sh \\
 1\sh &  1\sh &  1\sh &  1\sh &  1\sh &  1\sh &  1\sh \\
 1\sh & -1\sh & -1\sh & -1\sh & -1\sh & -1\sh & -1\sh \\
 \end{matrix*}
\endgroup$
&
$\begin{matrix*}[c]
S_-\sh\\
S_+\sh\\
S_-\sh\\
S_+\sh\\
S_-\sh\\
S_+\sh\\
S_-\sh\\
S_+\sh\\
-\sh  
\end{matrix*}$
&
$\begin{matrix*}[c]
000 \sh\\
001 \sh\\
010 \sh\\
011 \sh\\
100 \sh\\
101 \sh\\
110 \sh\\
111 \sh\\
-   \sh
\end{matrix*}$
&
$\begingroup % keep the change local
\setlength\arraycolsep{1pt}
\begin{matrix*}[r] 
 -1\sh & -1\sh & -1\sh & -1\sh & -1\sh & -1\sh & -1\sh \\
 -1\sh & -1\sh & -1\sh & -1\sh &  1\sh &  1\sh &  1\sh \\
 -1\sh &  1\sh &  1\sh &  1\sh & -1\sh & -1\sh & -1\sh \\
 -1\sh & -1\sh &  1\sh &  1\sh &  1\sh &  1\sh &  1\sh \\
  1\sh &  1\sh &  1\sh & -1\sh & -1\sh & -1\sh & -1\sh \\
 -1\sh &  1\sh &  1\sh &  1\sh &  1\sh &  1\sh &  1\sh \\
  1\sh &  1\sh &  1\sh &  1\sh & -1\sh & -1\sh & -1\sh \\
  1\sh &  1\sh &  1\sh &  1\sh &  1\sh &  1\sh &  1\sh \\
 -1\sh & -1\sh & -1\sh &  1\sh &  1\sh &  1\sh &  1\sh \\
\end{matrix*}
\endgroup$
&
$\begin{matrix*}[c]
S_-\sh\\
S_+\sh\\
S_-\sh\\
S_+\sh\\
S_-\sh\\
S_+\sh\\
S_-\sh\\
S_+\sh\\
-\sh
\end{matrix*}$\\
\hline
\end{tabular}
\end{table*}

As each state in \ac{tsrll} codes is associated with a certain initialization from the previous state, state-independent \ac{rll} codes with the same code rate could be achieved by assigning data words to $(d,\infty)$ words ending with $d$ zeros and separate \ac{nrzi} encoding. However, this significantly impacts the extent to which the assignment scheme can be optimized: During \ac{nrzi} encoding, each $(d,\infty)$ word will be encoded to two diametrically opposed \ac{rll} words. For example, the ($d,\infty$) word $\begin{bmatrix}1 & 0 & 0 & 0\end{bmatrix}^H$ will be mapped to $\begin{bmatrix}1 & 1 & 1 & 1\end{bmatrix}^H$ or $\begin{bmatrix}-1 & -1 & -1 & -1\end{bmatrix}^H$, depending on the last symbol of the previous word. If the channel noise causes the last symbol to flip, the corresponding next word will be decoded incorrectly. This inter-block dependency highly constrains the degree to which the assignment scheme can be optimized and, hence, has a strong negative impact on the \ac{bicm} capacity. By incorporating \ac{nrzi} encoding into the state structure of the code, which in turn allows to have independent assignment functions for both states, \ac{tsrll} codes circumvent this problem.\looseness-1

The rate of \ac{tsrll} codes depends on the number of available $(d,\infty)$ sequences ending with $d$ zeros, which can be obtained by evaluating \eqref{eq:number_d_seqs} with $n=q-d$. Therefore, the maximum rate is given by
\begin{equation}\label{eq:tsrll_coderate}
        R_\mathrm{RLL} = \frac{p}{q} = \frac{\lfloor\log_2 N_d(q-d)\rfloor}{q}.
\end{equation}
A plot of the achievable rates of the \ac{tsrll} codes over the code word length $q$ is shown in \cref{fig:coderates}. It can be seen that the code rates approach the capacity $C_\mathrm{RLL}(d)$ for increasing $q$, as the impact of the redundant $d+1$ symbols at the end of each \ac{rll} word on the code rate decreases. Unfortunately, the decoding complexity also increases with the code word length, necessitating a trade-off between code rate and decoding complexity. The chosen configurations for the evaluation in \cref{sec:tsrll_numerical_results} are marked by green circles. Another insight of \cref{fig:coderates} is that some values of $q$ are favorable. As $N_d(q-d)$ is rarely a power of $2$, some \ac{rll} words have to remain unused, which lowers the code rate (cf. the last row of \cref{tab:tsrll_code_example}).
\begin{figure}
        \centering
        \input{Figures/plot_coderate_vs_blocklength}
        \caption{Code rates for different $d$ constraints over the \ac{rll} word length $q$, with green circles indicating the chosen configurations for the numerical evaluation.}
        \label{fig:coderates}
\end{figure}

\subsection{Assignment Scheme Optimization}
The assignment scheme of \ac{tsrll} codes is defined by a tuple $\{\tilde{\sigma}_+, \tilde{\sigma}_-\}$ of assignment functions, one corresponding to each state. Generally, obtaining an optimal assignment scheme requires the joint optimization of both functions, which yields a highly complex optimization problem. However, as will be demonstrated in \cref{sec:tsrll_numerical_results}, applying the optimization from \eqref{eq:lap} independently to both states yields near-optimal performance. The rationale for this is that applying the projection matrix from \eqref{eq:theorem_2} to the modified data words yields channel input words that exhibit a distinct proximity to one specific \ac{rll} word. Furthermore, in \ac{tsrll} codes, $N_d(q-2d)$ \ac{rll} words are present in both states and only $N_d(q-d)-N_d(q-2d)$ are unique to each state. This overlap occurs when the underlying set of $(d, \infty)$ words contains pairs differing solely in the initial symbol, e.g., $\begin{bmatrix}0 & 0 & 1 & 0\end{bmatrix}$ and $\begin{bmatrix}1 & 0 & 1 & 0\end{bmatrix}$, which given opposite initialization during \ac{nrzi} encoding will be mapped to the same \ac{rll} word. Consequently, a pairing between a data word and an \ac{rll} word that minimizes the total Euclidean distance---the metric optimized by the \ac{lap} in \eqref{eq:lap}---in one state has a high probability of also being optimal in the other. Ultimately, mapping a data word to the identical \ac{rll} word across both states enhances the code's robustness against state-detection errors at the decoder. 

In \cite{zeitz_2025_pimrc}, we used the same approach to optimize the assignment scheme for \ac{tsrll} codes. However, there it was based on a heuristically determined projection matrix $\bm G$, whereas here $\bm G$ is systematically chosen according to \eqref{eq:theorem_2}. Eventually, this leads to a different assignment scheme, though resulting in similar \ac{bicm} capacities.

\subsection{Decoding} As stated in \cref{sec:preliminaries}, we use the soft-input soft-output \ac{rll} decoder from \cite{neuhaus_zero_crossing_2021}. The decoder is based on a \ac{bcjr} algorithm, which uses a forward and a backward recursion to calculate the bit-wise metrics. Therefore, the decoding can only commence once the entire block of \ac{rll} symbols has been received, introducing latency. In \ac{tsrll} codes, however, the state transitions only depend on the last $d+1$ symbols of each \ac{rll} word and most \ac{rll} words are present in both states, see \cref{tab:tsrll_code_example}. Therefore, the backward probabilities of the \ac{bcjr} decoder are mostly irrelevant, which motivates the study of a \emph{near \ac{map}}  \ac{rll} decoder without a backward recursion, enabling latency-free decoding of received \ac{rll} symbols. The investigation of this is left for future work.

\subsection{Numerical Results}\label{sec:tsrll_numerical_results}
To assess the performance of the proposed \ac{tsrll} codes and the corresponding assignment scheme optimization, we evaluate the \ac{bicm} capacity lower bound from \eqref{eq:bicm_lb_5}. Moreover, to obtain an upper bound on the system performance, we also evaluate the auxiliary channel lower bound on the mutual information rate (cf. \eqref{eq:aux_ch_mi}) with maxentropic \ac{rll} sequences. The results shown in \cref{fig:mi_vs_snr_mtx23} and \cref{fig:mi_vs_snr_mtx45} were obtained using the same system parameters as in \cref{sec:bicm_and_rll_numerical_results}. For reference we have also included the \ac{awgn} channel capacity $C\low{AWGN}$. Since the code rates of the \ac{tsrll} codes closely match the code rates of the \ac{msrll} codes from \cite{neuhaus_zero_crossing_2021}, see \cref{tab:rll_details}, a direct comparison of their performance is possible. We can see that \ac{tsrll} codes significantly outperform \ac{msrll} codes at low \ac{snr}, observing gains from $\SI{3}{\decibel}$ for $M\low{Tx}=2$, $d=1$ up to $\approx\SI{8}{\decibel}$ for $M\low{Tx}=5$, $d=4$, which proves that the assignment scheme plays a major role for the system performance in the presence of channel noise. As expected, at high \ac{snr}, both code types reach saturation levels determined purely by their code rates.

\newcolumntype{Y}{>{\centering\arraybackslash}X}
\begin{table}
\caption{Parameters of the \ac{rll} codes used for numerical evaluation}\label{tab:rll_details}
\centering
\begin{tabularx}{\columnwidth}{|c|c|Y|Y|Y|Y|Y|Y|}
\hline
\Ac{rll} constr.        & $C\low{RLL}(d)$                               & \multicolumn{3}{c|}{TS-RLL}           & \multicolumn{3}{c|}{MS-RLL}            \\  \cline{3-8}
$(d,k)$                 & $[\mathrm{bit/symbol}]$                       & $p$    & $q$      & Eff. $\eta$        & $p$    & $q$      & Eff. $\eta$       \\
\hhline{|=|=|=|=|=|=|=|=|}
$(1,\infty)$            & $0.694$                                       & $12$   & $18$     & $96.1\%$           & $2$    & $3$      & $96.1\%$          \\            
\hline
$(2,\infty)$            & $0.551$                                       & $7$    & $14$     & $90.7\%$           & $1$    & $2$      & $90.7\%$          \\
\hline
$(3,\infty)$            & $0.465$                                       & $12$   & $28$     & $92.2\%$           & $3$    & $7$      & $92.2\%$          \\
\hline
$(4,\infty)$            & $0.406$                                       & $14$   & $37$     & $93.3\%$           & $3$    & $8$      & $92.4\%$          \\
\hline
\end{tabularx}
\end{table}

When looking at the auxiliary channel lower bound on the mutual information rate with maxentropic \ac{rll} sequences, we find that throughout the \ac{snr} range, larger values of $M\low{Tx}$ and respectively $d$, consistently lead to a larger rate. Differently, for the \ac{rll} codes, the achievable rate lower bound from \eqref{eq:bicm_lb_6} reveals that in the low \ac{snr} range, lower $M\low{Tx}$ configurations actually perform better. This effect is visible in both code types, though more pronounced in \ac{msrll} codes. We attribute this behavior to code efficiency and only partially to the assignment scheme. The configuration for $d = 1$ and $M\low{Tx} = 2$ operates with $\eta=0.96$, near the \ac{rll} capacity, whereas the \ac{rll} codes for $d=2$ and $d=3$ are less efficient. Furthermore, the absolute gap to the maximum rate of the system widens for larger $M\low{Tx}$ because the code rate is multiplied by the \ac{ftn} signaling factor $M\low{Tx}$. Using higher-rate \ac{rll} codes could likely close the gap to the auxiliary channel lower bound, however, with the current codes this would come at the cost of increased encoding and decoding complexity, due to increased \ac{rll} and data word length.
\begin{figure}
    \centering
    \input{Figures/plot_MI_LBs_over_SNR_mtx23}
    \caption{Comparison of i) the \ac{bicm} capacity lower bound according to \eqref{eq:bicm_lb_5} between \ac{tsrll} codes and \ac{msrll} codes from \cite{neuhaus_zero_crossing_2021} and ii) the auxiliary channel lower bound \eqref{eq:aux_ch_mi} for maxentropic \ac{rll} sequences, for $M\low{Tx}=2$, $d=1$ and $M\low{Tx}=3$, $d=2$.}
    \label{fig:mi_vs_snr_mtx23}
\end{figure}

\begin{figure}
    \centering
    \input{Figures/plot_MI_LBs_over_SNR_mtx45}
    \caption{Comparison of i) the \ac{bicm} capacity lower bound according to \eqref{eq:bicm_lb_5} between \ac{tsrll} codes and \ac{msrll} codes from \cite{neuhaus_zero_crossing_2021} and ii) the auxiliary channel lower bound \eqref{eq:aux_ch_mi} for maxentropic \ac{rll} sequences, for $M\low{Tx}=4$, $d=3$ and $M\low{Tx}=5$, $d=4$.}
    \label{fig:mi_vs_snr_mtx45}
\end{figure}

%% file: Figures/plot_coderate_vs_blocklength.tex
% !TEX root = ../main.tex
% !TEX program=pdflatex
% !TEX options=--shell-escape

\pgfplotstableread[col sep = comma]{Figures/plot_combined_coderate_vs_blocklength.csv}\mydata
\pgfkeys{/pgf/number format/.cd,1000 sep={\,}}
\pgfplotsset{
    compat=newest,
    /pgfplots/legend image code/.code={%
        \draw[mark repeat=3,mark phase=3,#1] 
            plot coordinates {
                (0cm,0cm) 
                (0.4cm,0cm)
                (0.2cm,0cm)
            };
    },
}
\begin{tikzpicture}
\definecolor{mycolor1}{rgb}{0.00000,0.400,0.700}%
\definecolor{mycolor2}{rgb}{0.9000,0.400,0.150}%
\definecolor{mycolor3}{rgb}{0.9500,0.800,0.1500}%
\definecolor{mycolor4}{rgb}{0.49400,0.18400,0.55600}%
\definecolor{mycolor5}{rgb}{0.46600,0.67400,0.18800}%
\definecolor{mycolor6}{rgb}{0.30100,0.74500,0.93300}%
\definecolor{mycolor7}{rgb}{0.63500,0.07800,0.18400}%

\begin{axis}[xlabel={$q$},
             ylabel={$R_\mathrm{RLL}$},
             width=\columnwidth,
             height=0.6\columnwidth,
	         xmin=1, xmax=40,
	         ymin=0, ymax=0.75,
	         ymajorgrids, yminorgrids,
	         xmajorgrids, xminorgrids,
             legend columns=3,
	         legend style={at={(1,0)}, anchor=south east, inner sep=1pt, column sep=1pt, /tikz/every even column/.append style={column sep=0.1cm}},
	         legend cell align={left}]

\addlegendimage{color=mycolor1, only marks, mark size=2pt, mark=square*, mark options={solid, fill=mycolor1, draw=mycolor1}}
\addlegendentry{\footnotesize{$d=1$}}
\addlegendimage{color=mycolor2, only marks, mark size=2pt, mark=square*, mark options={solid, fill=mycolor2, draw=mycolor2}}
\addlegendentry{\footnotesize{$d=2$}}
\addlegendimage{color=mycolor4, only marks, mark size=2pt, mark=square*, mark options={solid, fill=mycolor4, draw=mycolor4}}
\addlegendentry{\footnotesize{$d=3$}}
\addlegendimage{color=mycolor6, only marks, mark size=2pt, mark=square*, mark options={solid, fill=mycolor6, draw=mycolor6}}
\addlegendentry{\footnotesize{$d=4$}}
%\addlegendimage{color=black!50!white, mark=x, dotted, mark options={solid}}
%\addlegendentry{\footnotesize{$\text{Block RLL}$}}
\addlegendimage{color=black, line width=0.5mm, mark=none, dash pattern=on 2pt off 0pt}
\addlegendentry{\footnotesize{$C\low{RLL}(d)$}}
\addlegendimage{color=black, mark=x, dash pattern=on 2pt off 1pt, mark options={solid}}
\addlegendentry{\footnotesize{$\text{\Ac{tsrll}}$}}
%\addlegendimage{color=black, mark=o, dash pattern=on 2pt off 0pt}
%\addlegendentry{\footnotesize{$\text{Type 1 Code}$}}

% Capacity
\addplot [color=mycolor1, line width=0.5mm, dash pattern=on 2pt off 0pt, mark=none]
    table[x={blocklength}, y={Capacity_d_1}] {\mydata};
\addplot [color=mycolor2, line width=0.5mm, dash pattern=on 2pt off 0pt, mark=none]
    table[x={blocklength}, y={Capacity_d_2}] {\mydata};
\addplot [color=mycolor4, line width=0.5mm, dash pattern=on 2pt off 0pt, mark=none]
    table[x={blocklength}, y={Capacity_d_3}] {\mydata};
\addplot [color=mycolor6, line width=0.5mm, dash pattern=on 2pt off 0pt, mark=none]
    table[x={blocklength}, y={Capacity_d_4}] {\mydata};

% Block RLL 
%\addplot [color=mycolor1, dotted, mark=+, mark size=1pt, mark options={solid}]
%    table[x={blocklength}, y={BlockRate_d_1}] {\mydata};
%\addplot [color=mycolor2!50!white, dash pattern=on 2pt off 1pt, mark=x, mark size=1pt]
%    table[x={blocklength}, y={BlockRate_d_2}] {\mydata};
%\addplot [color=mycolor4!50!white, dash pattern=on 2pt off 1pt, mark=x, mark size=1pt]
%    table[x={blocklength}, y={BlockRate_d_3}] {\mydata};  
%\addplot [color=mycolor6!50!white, dash pattern=on 2pt off 1pt, mark=x, mark size=1pt]
%    table[x={blocklength}, y={BlockRate_d_4}] {\mydata}; 

% TS-RLL
\addplot [color=mycolor1, dash pattern=on 2pt off 1pt, mark=x, mark size=1pt]
    table[x={blocklength}, y={TSRLLRate_d_1}] {\mydata};
\addplot [color=mycolor2, dash pattern=on 2pt off 1pt, mark=x, mark size=1pt]
    table[x={blocklength}, y={TSRLLRate_d_2}] {\mydata};
\addplot [color=mycolor4, dash pattern=on 2pt off 1pt, mark=x, mark size=1pt]
    table[x={blocklength}, y={TSRLLRate_d_3}] {\mydata};
\addplot [color=mycolor6, dash pattern=on 2pt off 1pt, mark=x, mark size=1pt]
    table[x={blocklength}, y={TSRLLRate_d_4}] {\mydata};

% adding the green circles
% d = 1
\pgfplotstableread{
q    Rrll
18    0.666666
}\mytable
\addplot [color=mycolor5, only marks, mark=o, mark size=3pt, thick]
    table[x={q}, y={Rrll}] {\mytable};
% d=2
\pgfplotstableread{
q    Rrll
14    0.5
}\mytable
\addplot [color=mycolor5, only marks, mark=o, mark size=3pt, thick]
    table[x={q}, y={Rrll}] {\mytable};
% d=3
\pgfplotstableread{
q    Rrll
28    0.42857
}\mytable
\addplot [color=mycolor5, only marks, mark=o, mark size=3pt, thick]
    table[x={q}, y={Rrll}] {\mytable};
% d=4
\pgfplotstableread{
q    Rrll
37    0.37838
}\mytable
\addplot [color=mycolor5, only marks, mark=o, mark size=3pt, thick]
    table[x={q}, y={Rrll}] {\mytable};
\end{axis}
\end{tikzpicture}

% creating the second legend
%\addplot[color=black,dash pattern=on 2pt off 2pt] table[row sep=crcr]{0 -1 \\};
%\label{leg2:capacity}

%\addplot[color=black, only marks, mark=*] table[row sep=crcr]{0 -1 \\};
%\label{leg2:type1}

%\addplot[color=black, only marks, mark=square*] table[row sep=crcr]{0 -1 \\};
%\label{leg2:type2}

%\node [draw,fill=white,anchor=south east] at (rel axis cs: 1,0) {\shortstack[l]{
%\ref{leg2:capacity} \footnotesize{$C_\mathrm{RLL}$}
%\\
%\ref{leg2:type1} \footnotesize{Type 1}
%\\
%\ref{leg2:type2} \footnotesize{Type 2}
%}};

%% file: Figures/plot_MI_LBs_over_SNR_mtx23.tex
% !TEX root = ../main.tex
% !TEX program=pdflatex
% !TEX options=--shell-escape

\pgfplotstableread[col sep = comma]{Figures/plot_AuxMi_and_bitLevelMI_lower_bounds_vs_snr.csv}\mydata
\pgfkeys{/pgf/number format/.cd,1000 sep={\,}}
\pgfplotsset{
    compat=newest,
    /pgfplots/legend image code/.code={%
        \draw[mark repeat=3,mark phase=3,#1] 
            plot coordinates {
                (0cm,0cm) 
                (0.6cm,0cm)
                (0.3cm,0cm)
            };
    },
}
\begin{tikzpicture}
\definecolor{mycolor1}{rgb}{0.00000,0.400,0.700}%
\definecolor{mycolor2}{rgb}{0.9000,0.400,0.150}%
\definecolor{mycolor3}{rgb}{0.9500,0.800,0.1500}%
\definecolor{mycolor4}{rgb}{0.49400,0.18400,0.55600}%
\definecolor{mycolor5}{rgb}{0.46600,0.67400,0.18800}%
\definecolor{mycolor6}{rgb}{0.30100,0.74500,0.93300}%
\definecolor{mycolor7}{rgb}{0.63500,0.07800,0.18400}%

\begin{axis}[xlabel={$\mathrm{SNR} \;[\mathrm{dB}]$},
             ylabel style={align=center, inner sep=1pt},
             ylabel={achievable rate\\\footnotesize{$[\mathrm{bits/Nyq.\,interval}]$}},
             width=\columnwidth,
             height=0.6\columnwidth,
	         xmin=-10, xmax=30,
	         ymin=-0.1, ymax=2,
	         ymajorgrids, yminorgrids,
	         xmajorgrids, xminorgrids,
             legend columns=1,
             legend style={at={(1,0)}, anchor=south east, /tikz/column 3/.style={column sep=5pt}},
             legend cell align={left}]

\addlegendimage{color=black, dash pattern=on 8pt off 2pt on 1pt off 2pt, mark=none}
\addlegendentry{\footnotesize{$C_{\mathrm{AWGN}}$}}
\addlegendimage{color=black, dash pattern=0n 2pt off 2pt, mark size=2pt, mark=o, mark options={solid, fill=black, draw=black}}
\addlegendentry{\footnotesize{max. entr.}}
\addlegendimage{color=black, mark size=2pt, mark=triangle*, mark options={solid, fill=black, draw=black}}
\addlegendentry{\footnotesize{MS-FSM RLL}}
\addlegendimage{color=black, thick, mark size=2pt, mark=diamond*, mark options={solid, fill=black, draw=black}}
\addlegendentry{\footnotesize{TS-FSM RLL}}

% awgn capacity
\addplot [color=black, dash pattern=on 8pt off 2pt on 1pt off 2pt, mark=none]
    table[x={snr}, y={awgnCap}] {\mydata};

%% Mtx = 2
%% maxentropic sequences
\addplot [color=mycolor1, mark=o, mark size=2pt, dash pattern=on 2pt off 2pt, mark options={solid}]
    table[x={snr}, y={AuxMI_mtx2}] {\mydata};
% MS-FSM Rll codes
\addplot [color=mycolor1, dash pattern=on 5pt off 0pt, mark=triangle*, mark size=1.5pt]
    table[x={snr}, y={bitLevelMI_MSRLL_mtx=2}] {\mydata};
% TS-FSM Rll code - optimized
\addplot [color=mycolor1, thick, dash pattern=on 5pt off 0pt, mark=diamond*, mark size=1.5pt]
    table[x={snr}, y={bitLevelMI_TSRLL_mtx=2}] {\mydata};

%% Mtx = 3
%% maxentropic sequences
\addplot [color=mycolor2, mark=o, mark size=2pt, dash pattern=on 2pt off 2pt, mark options={solid}]
    table[x={snr}, y={AuxMI_mtx3}] {\mydata};
% MS-FSM Rll codes
\addplot [color=mycolor2, dash pattern=on 5pt off 0pt, mark=triangle*, mark size=1.5pt]
    table[x={snr}, y={bitLevelMI_MSRLL_mtx=3}] {\mydata};
% TS-FSM Rll code - optimized
\addplot [color=mycolor2, thick, dash pattern=on 5pt off 0pt, mark=diamond*, mark size=1.5pt]
    table[x={snr}, y={bitLevelMI_TSRLL_mtx=3}] {\mydata};

% second legend
% creating the second legend
\addplot[color=mycolor1, only marks, mark=square*, mark size=3.5pt] table[row sep=crcr]{0 -1 \\};
\label{leg2:Mtx2}
\addplot[color=mycolor2, only marks, mark=square*, mark size=3.5pt] table[row sep=crcr]{0 -1 \\};
\label{leg2:Mtx3}
\node [draw,fill=white,anchor=north west] at (rel axis cs: 0,1) {\shortstack[l]{
\ref{leg2:Mtx2} \footnotesize{$M\low{Tx}=2$, $d=1$} \hspace{5pt} \\
\ref{leg2:Mtx3} \footnotesize{$M\low{Tx}=3$, $d=2$} \hspace{5pt}
}};
\end{axis}
\end{tikzpicture}

%% file: Figures/plot_MI_LBs_over_SNR_mtx45.tex
% !TEX root = ../main.tex
% !TEX program=pdflatex
% !TEX options=--shell-escape

\pgfplotstableread[col sep = comma]{Figures/plot_AuxMi_and_bitLevelMI_lower_bounds_vs_snr.csv}\mydata
\pgfkeys{/pgf/number format/.cd,1000 sep={\,}}
\pgfplotsset{
    compat=newest,
    /pgfplots/legend image code/.code={%
        \draw[mark repeat=3,mark phase=3,#1] 
            plot coordinates {
                (0cm,0cm) 
                (0.6cm,0cm)
                (0.3cm,0cm)
            };
    },
}
\begin{tikzpicture}
\definecolor{mycolor1}{rgb}{0.00000,0.400,0.700}%
\definecolor{mycolor2}{rgb}{0.9000,0.400,0.150}%
\definecolor{mycolor3}{rgb}{0.9500,0.800,0.1500}%
\definecolor{mycolor4}{rgb}{0.49400,0.18400,0.55600}%
\definecolor{mycolor5}{rgb}{0.46600,0.67400,0.18800}%
\definecolor{mycolor6}{rgb}{0.30100,0.74500,0.93300}%
\definecolor{mycolor7}{rgb}{0.63500,0.07800,0.18400}%

\begin{axis}[xlabel={$\mathrm{SNR} \;[\mathrm{dB}]$},
             ylabel style={align=center, inner sep=1pt},
             ylabel={achievable rate\\\footnotesize{$[\mathrm{bits/Nyq.\,interval}]$}},
             width=\columnwidth,
             height=0.6\columnwidth,
	         xmin=-10, xmax=30,
	         ymin=-0.1, ymax=2,
	         ymajorgrids, yminorgrids,
	         xmajorgrids, xminorgrids,
             legend columns=1,
             legend style={at={(1,0)}, anchor=south east, /tikz/column 1/.style={column sep=5pt}},
             legend cell align={left}]

\addlegendimage{color=black, dash pattern=on 8pt off 2pt on 1pt off 2pt, mark=none}
\addlegendentry{\footnotesize{$C_{\mathrm{AWGN}}$}}
\addlegendimage{color=black, dash pattern=0n 2pt off 2pt, mark size=2pt, mark=o, mark options={solid, fill=black, draw=black}}
\addlegendentry{\footnotesize{max. entr.}}
\addlegendimage{color=black, mark size=2pt, mark=triangle*, mark options={solid, fill=black, draw=black}}
\addlegendentry{\footnotesize{MS-FSM RLL}}
\addlegendimage{color=black, thick, mark size=2pt, mark=diamond*, mark options={solid, fill=black, draw=black}}
\addlegendentry{\footnotesize{TS-FSM RLL}}

% awgn capacity
\addplot [color=black, dash pattern=on 8pt off 2pt on 1pt off 2pt, mark=none]
    table[x={snr}, y={awgnCap}] {\mydata};

%% Mtx = 3
%% maxentropic sequences
\addplot [color=mycolor4, mark=o, mark size=2pt, dash pattern=on 2pt off 2pt, mark options={solid}]
    table[x={snr}, y={AuxMI_mtx4}] {\mydata};
% MS-FSM Rll codes
\addplot [color=mycolor4, dash pattern=on 5pt off 0pt, mark=triangle*, mark size=1.5pt]
    table[x={snr}, y={bitLevelMI_MSRLL_mtx=4}] {\mydata};
% TS-FSM Rll code - optimized
\addplot [color=mycolor4, thick, dash pattern=on 5pt off 0pt, mark=diamond*, mark size=1.5pt]
    table[x={snr}, y={bitLevelMI_TSRLL_mtx=4}] {\mydata};

%% Mtx = 4
%% maxentropic sequences
\addplot [color=mycolor6, mark=o, mark size=2pt, dash pattern=on 2pt off 2pt, mark options={solid}]
    table[x={snr}, y={AuxMI_mtx5}] {\mydata};
% MS-FSM Rll codes
\addplot [color=mycolor6, dash pattern=on 5pt off 0pt, mark=triangle*, mark size=1.5pt]
    table[x={snr}, y={bitLevelMI_MSRLL_mtx=5}] {\mydata};
% TS-FSM Rll code - optimized
\addplot [color=mycolor6, thick, dash pattern=on 5pt off 0pt, mark=diamond*, mark size=1.5pt]
    table[x={snr}, y={bitLevelMI_TSRLL_mtx=5}] {\mydata};

% second legend
% creating the second legend
\addplot[color=mycolor4, only marks, mark=square*, mark size=3.5pt] table[row sep=crcr]{0 -1 \\};
\label{leg2:Mtx4}
\addplot[color=mycolor6, only marks, mark=square*, mark size=3.5pt] table[row sep=crcr]{0 -1 \\};
\label{leg2:Mtx5}
\node [draw,fill=white,anchor=north west] at (rel axis cs: 0,1) {\shortstack[l]{
\ref{leg2:Mtx4} \footnotesize{$M\low{Tx}=4$, $d=3$} \hspace{5pt} \\
\ref{leg2:Mtx5} \footnotesize{$M\low{Tx}=5$, $d=4$} \hspace{5pt}
}};

\end{axis}
\end{tikzpicture}

%% file: 6_conclusion.tex
% !TEX root=main.tex
% !TEX program=pdflatex
% !TEX options=--shell-escape

\section{Conclusion}\label{sec:conclusion}
When using \ac{rll} codes in \ac{bicm} systems, the \ac{rll} encoder inherently assumes the function of the symbol mapper. Because the performance of any \ac{bicm} system is heavily influenced by the assignment scheme of the symbol mapper, the assignment between input bit sequences and \ac{rll} sequences, as governed by the \ac{rll} code, strongly impacts the overall efficacy of the system in the low-to-mid \ac{snr} range.

To systematically understand this dependency, we analyzed a block channel, modeling the transmission of a single \ac{rll} word, ignoring inter-block interference. By deriving a closed form expression for the linear coefficient of the Maclaurin series of the \ac{bicm} capacity at an \ac{snr} of zero, we were able to show, that the optimal assignment scheme can be obtained as the solution of a \ac{qap}. By studying the limit of the linear coefficient for infinitely large block-length, we derived a simple, yet powerful optimization approach for the assignment scheme of state-independent \ac{rll} codes. Through numerical evaluation of a lower bound on the \ac{bicm} capacity rate, we demonstrated that optimizing the assignment scheme to maximize the \ac{bicm} capacity in the block channel also maximizes the mutual information rate in the targeted channel model with inter-block interference, as encountered in \ac{zxm} systems. Interestingly, while our analytical derivation was grounded in the asymptotically low \ac{snr} regime, the resulting optimization methodology exhibited robust performance gains across medium and high \ac{snr} values as well.

Further, we introduced \ac{tsrll} codes---a class of \ac{rll} codes designed to balance achievable code rates with the necessary degrees of freedom for the selection of the assignment scheme---and motivated why the proposed assignment scheme optimization algorithm for state independent \ac{rll} is also suitable for \ac{tsrll} codes. Our assessment of the achievable rate lower bounds in \ac{zxm} systems confirmed that the proposed assignment scheme optimization is indeed also suitable for \ac{tsrll} codes, which as a consequence substantially outperform previously existing \ac{rll} codes. 

Exploring how these insights on the connection between the assignment scheme of \ac{rll} codes and the performance in \ac{bicm} systems can be leveraged to also improve other classes of \ac{rll} codes, such as enumerative \ac{rll} codes \cite{tang_block_1970}, as well as the design of a simplified near \ac{map} \ac{rll} decoder for \ac{tsrll} codes, remain subjects for future work.